\documentclass[12pt, journal, draft, onecolumn]{IEEEtran} 

\usepackage[cmex10]{amsmath}
\usepackage{amssymb}
\usepackage{mathrsfs}
\usepackage[final]{graphicx}

\newtheorem{remark}{Remark}
\newcommand{\expectation}{\ensuremath{\mathbb{E}}}
\newcommand{\Expt}{\expectation}
\usepackage{url}

\begin{document}

\title{Extremality Properties for the Basic Polarization Transformations}
\author{\IEEEauthorblockN{Mine Alsan}\\
\small\IEEEauthorblockA{Information Theory Laboratory\\
Ecole Polytechnique F\' ed\' erale de Lausanne\\
CH-1015 Lausanne, Switzerland\\
Email: mine.alsan@epfl.ch}
\normalsize}
\maketitle
\pagestyle{empty}
\thispagestyle{empty}
\IEEEpeerreviewmaketitle

\let\thefootnote\relax\footnotetext{The material in this paper was presented in part at the IEEE International Symposium on Information Theory, Boston, USA,
July 2012.}

\begin{abstract}
We study the extremality of the BEC and the BSC for Gallager's reliability function
$E_0$ evaluated under the uniform input distribution for binary input DMCs from the aspect of channel polarization. 
In particular, we show that amongst all B-DMCs of a given $E_0(\rho)$ value, for a fixed $\rho\geq 0$, 
the BEC and BSC are extremal in the evolution of $E_0$ under the one-step polarization transformations. 

\end{abstract}

\begin{IEEEkeywords}
Channel reliability function, channel polarization, extremal channels
\end{IEEEkeywords}

\section{Introduction}
While the capacity of a memoryless channel $W$ gives the largest
rate that may be communicated reliably across it, the reliability function
$E(R,W)$ provides a finer measure on the quality of the channel: for any
rate $R$ less than channel capacity, it is possible to find a sequence of
codes of increasing blocklength, each of which of rate at least $R$, and
whose block error probability decays exponentially to zero in the
blocklength ---  $E(R,W)$ is the largest possible rate of this decay.

Gallager classical treatise~\cite{578869} gives a lower bound to $E(R,W)$,
the random coding exponent $E_r(R,W)$ in the form $E_r(R,W)=\max_{\rho\in[0,1]}
E_0(\rho,W)-\rho R$.  Remarkably, this lower bound is tight for rates above
the critical rate $E_0'(1,W)$.  The function $E_0(\rho,W)$ that appears
as an auxiliary function on the road to deriving $E_r(R,W)$ turns out to
be of independent interest in its own right.  In particular, $E_0(\rho,W)/\rho$
is the largest rate for which a sequential decoder can operate while keeping
the $\rho$-th moment of the decoder's computation effort per symbol bounded.

In \cite{6284065}, we investigated the extremal properties of $E_0(\rho,W)$ evaluated under the uniform input distribution 
for the class of binary input channels.  We have shown that among all
such channels with a given value of $E_0(\rho_1,W)$, for $\rho_1\in[0, 1]$, the binary erasure channel
(BEC) and the binary symmetric channel (BSC) distinguish themselves in
certain ways: they have, respectively, the largest and smallest value of
$E_0'(\rho_2,W)$ for any $\rho_2\in[\rho_1, 1]$. Furthermore, we showed that amongst channels $W$ with a given value of
$E_0(\rho, W)$ for a given $\rho\in[0, 1]$, the BEC and BSC are the most and least
polarizing under Ar{\i}kan's polar transformations in the sense that their
polar transforms $W^+$ and $W^-$ have the largest and smallest difference
in their $E_0$ values.

In this paper, we extend the result related to the BEC and BSC being extremal for Ar{\i}kan's
polarization transforms to the region where $\rho \geq 0$. In his award winning paper \cite{1669570}, Ar{\i}kan describes two
synthetic channels $W^+$, and $W^-$ which can be obtained from two independent
copies of $W$.  It is well known (proved as a corollary to extremes of
information combining) that among all channels $W$ with a given
symmetric capacity $I(W)$, the BEC and BSC polarize most and least in
the sense of having the largest and smallest difference between $I(W^+)$
and $I(W^-)$.  We report a more general conclusion: amongst all channels
$W$ with a given value of $E_0(\rho,W)$, the BEC and BSC polarize most and
least in the sense of having the largest difference between $E_0(\rho,W^+)$
and $E_0(\rho,W^-)$ whenever $\rho\in[0, 1]\cup[2, \infty]$. 
On the other hand, for $\forall \rho\in[1, 2]$, we show that the BEC maximizes, and the BSC minimizes the $E_0$ values obtained after both applying the $W^+$, or the $W^-$ transformations.  

\subsection{Definitions}
Given a binary input channel $W$, let $E_0(\rho,W)$ denote ``Gallager's $E_0$'' \cite[p.~138]{578869} evaluated for the uniform input distribution:
\begin{equation} \label{eq:E0}
E_0(\rho, W) = -\log \displaystyle\sum_{y\in \mathcal{Y}} \left[ \frac{1}{2} W(y\mid 0)^{\frac{1}{1+\rho}} + \frac{1}{2} W(y\mid 1)^{\frac{1}{1+\rho}} \right]^{1+\rho}.   
\end{equation} 

Theorem 5.6.3 in~\cite{578869} summarizes the properties of $E_0(\rho, W)$ with respect to the variable $\rho$. For $\rho \geq 0$, $E_0(\rho, W)$ 
is a positive, concave increasing function in $\rho$. Moreover, the symmetric capacity $I(W)$ of the channel
can be derived from $E_0(\rho, W)$ by  
\begin{equation}\label{eq::I_def} 
\lim_{\rho\to 0}\frac{\displaystyle E_0(\rho,W)}{\displaystyle \rho} = \frac{\displaystyle \partial}{ \displaystyle \partial \rho} E_0(\rho, W)\Bigl\lvert_{\rho=0} = I(W)
\end{equation}
and the Bhattacharyya parameter $Z(W)$ from the cut-off rate as
\begin{equation}\label{eq::Z_def} 
E_0(1, W) = \log \frac{\displaystyle 2}{ \displaystyle 1 + Z(W)}.
\end{equation}

The next lemma due to Telatar and Ar{\i}kan~\cite{notes1} introduces a useful representation for the $E_0(\rho, W)$ parameter.
\newtheorem{lemma}{Lemma}
\begin{lemma}\label{lem:basic}\cite{notes1}
Given a symmetric B-DMC $W$, and a fixed $\rho \in [0, 1]$, there exist a random variable $Z$ taking values in the $[0, 1]$ interval such that
\begin{equation}\label{eq:Eo}
 E_0( \rho, W) = -\log{\Expt\left[g(\rho, Z)\right]}   
\end{equation}
where
\begin{equation}\label{eq:g}
 g(\rho, z) =  \left(\frac{1}{2}\left( 1 + z\right)^{\frac{1}{1+\rho}} + \frac{1}{2}\left(1 - z\right)^{\frac{1}{1+\rho}} \right)^{1+\rho}.
\end{equation}
Moreover, the random variable $Z_{\mathsf{BEC}}$ of a binary erasure channel is $\{0, 1\}$ valued. The random variable $Z_{\mathsf{BSC}}$ of a
binary symmetric channel is a constant $z_{\mathsf{BSC}}$. 
\end{lemma}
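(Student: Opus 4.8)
The plan is to make the symmetry of $W$ explicit by decomposing the channel into a mixture of binary symmetric channels, and then to evaluate the defining sum (\ref{eq:E0}) of $E_0$ one mixture component at a time.

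First I would invoke the standard structural description of a symmetric B-DMC: there is an involution $\pi$ on the output alphabet $\mathcal Y$ with $W(y\mid 1)=W(\pi(y)\mid 0)$ for every $y$, so the orbits of $\pi$ have size one or two. To each size-two orbit $\{y,\bar y\}$ I attach a weight $p_{\{y,\bar y\}}:=W(y\mid 0)+W(y\mid 1)$, which is the probability under the uniform input that the output lands in this orbit, together with the scalar $z_{\{y,\bar y\}}:=|W(y\mid 0)-W(y\mid 1)|/\bigl(W(y\mid 0)+W(y\mid 1)\bigr)\in[0,1]$; conditioned on the output landing in $\{y,\bar y\}$ the channel is a BSC of crossover $\tfrac12(1-z_{\{y,\bar y\}})$. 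To each fixed point $\{y\}$, where necessarily $W(y\mid 0)=W(y\mid 1)$, I attach weight $p_{\{y\}}:=W(y\mid 0)$ and scalar $z_{\{y\}}:=0$. These weights sum to one over all orbits, so they define the law of a $[0,1]$-valued random variable $Z$.

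Next I would substitute this data into (\ref{eq:E0}) and regroup the outer sum by the orbits of $\pi$. With $s:=1/(1+\rho)$, the two terms of a size-two orbit $\{y,\bar y\}$ are equal; pulling out $p_{\{y,\bar y\}}$ and writing $W(y\mid 0)/p_{\{y,\bar y\}}$ and $W(y\mid 1)/p_{\{y,\bar y\}}$ as $\tfrac12(1\pm z_{\{y,\bar y\}})$ in the appropriate order turns their sum into exactly $p_{\{y,\bar y\}}\,g(\rho,z_{\{y,\bar y\}})$, using the degree-one positive homogeneity of $(a,b)\mapsto(\tfrac12 a^{s}+\tfrac12 b^{s})^{1/s}$. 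A fixed point $\{y\}$ contributes $W(y\mid 0)=p_{\{y\}}=p_{\{y\}}\,g(\rho,0)$ since $g(\rho,0)=1$. Summing over orbits gives $\sum_{y}[\cdots]^{1+\rho}=\sum_{\text{orbits}}p\,g(\rho,z)=\Expt[g(\rho,Z)]$, and taking $-\log$ yields (\ref{eq:Eo}). I would then specialize: for the $\mathsf{BEC}$ with erasure probability $\delta$ the orbits are $\{0,1\}$ (weight $1-\delta$, scalar $1$) and the erasure letter (weight $\delta$, scalar $0$), so $Z_{\mathsf{BEC}}$ is $\{0,1\}$-valued; for a $\mathsf{BSC}$ there is a single size-two orbit, so $Z_{\mathsf{BSC}}$ equals the constant $z_{\mathsf{BSC}}=1-2\epsilon$.

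I do not expect any genuine obstacle here — the result is from \cite{notes1} and its substance lies entirely in setting up the orbit decomposition correctly — but the point needing care is the bookkeeping around degenerate output letters: letters with $W(y\mid 0)=W(y\mid 1)=0$ are simply discarded, a letter with exactly one of $W(y\mid 0),W(y\mid 1)$ equal to zero is paired by $\pi$ with a second such letter and yields scalar $1$, and the factoring step must be checked (it holds trivially) when a weight $p$ vanishes. It is also worth noting that the computation never uses $\rho\le 1$, so the representation (\ref{eq:Eo}) in fact holds verbatim for every $\rho\ge 0$ — precisely the range the polarization arguments in the sequel require.
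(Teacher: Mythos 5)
Your proposal is correct, but it follows a different route from the paper's. The paper never invokes the symmetry of $W$: it simply defines, per output letter, $q_{W}(y)=\tfrac12\bigl(W(y\mid0)+W(y\mid1)\bigr)$ and $\Delta_{W}(y)=\bigl(W(y\mid0)-W(y\mid1)\bigr)/\bigl(W(y\mid0)+W(y\mid1)\bigr)$, writes $W(y\mid0)=q_W(y)(1+\Delta_W(y))$, $W(y\mid1)=q_W(y)(1-\Delta_W(y))$, and sets $Z=\lvert\Delta_W(Y)\rvert$ with $Y\sim q_W$; each summand of \eqref{eq:E0} then factors directly as $q_W(y)\,g(\rho,\lvert\Delta_W(y)\rvert)$ by the same homogeneity you use, and \eqref{eq:Eo} follows letter by letter. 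You instead use the symmetry involution $\pi$ to group letters into orbits and view $W$ as a mixture of BSCs, assigning one $(p,z)$ pair per orbit. The two constructions give the same law for $Z$, since both letters of an orbit carry the same value $\lvert\Delta_W\rvert$ and their $q_W$-weights add up to your orbit weight, so your argument is a valid proof of the statement as given. What the paper's version buys is generality: it applies verbatim to arbitrary (not necessarily symmetric) B-DMCs, which is exactly how the representation is reused in Lemmas~\ref{lem:minus} and \ref{lem:plus} for general $W_1$, $W_2$; your version needs the symmetry hypothesis but makes the ``symmetric channel as a BSC mixture'' picture, and hence the BEC/BSC special cases, completely transparent. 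Your closing observation that nothing in the computation uses $\rho\le 1$ is also consistent with the paper, which indeed invokes the representation for all $\rho\ge 0$ later on. Two trivial points of care: for a BSC with crossover $\epsilon>1/2$ the constant is $\lvert 1-2\epsilon\rvert$ rather than $1-2\epsilon$, and your fixed-point bookkeeping ($z=0$, $g(\rho,0)=1$) is exactly what the paper's $\Delta_W(y)=0$ case gives, so no discrepancy arises there.
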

\begin{proof}
Recall
$\displaystyle
   E_0( \rho, W) = -\log\displaystyle\sum_{y} \bigg[ \hspace{2mm} \frac{1}{2} W(y \mid 0)^{\frac{1}{1+\rho}} + \frac{1}{2} W(y \mid 1)^{\frac{1}{1+\rho}} \hspace{2mm} \bigg]^{1+\rho} \nonumber 
$.
Define
\begin{equation}\label{eq:dist}
 q_{W}(y)=\frac{W(y\mid0) + W(y\mid1)}{2} \quad\text{and}\quad \Delta_{W}(y) = \frac{W(y\mid0)-W(y\mid1)}{W(y\mid0)+W(y\mid1)} \\
\end{equation}
so that
 $W(y\mid0)= q_{W}(y)[1+\Delta_{W}(y)]$ and $W(y\mid1) = q_{W}(y)[1-\Delta_{W}(y)]$.
Then, one can define the random variable $Z =  \lvert\Delta_{W}(Y)\rvert\in [0, 1]$ where $Y$ has the probability distribution $q_{W}(y)$, and obtain 
\eqref{eq:Eo} by simple manipulations. The claims about $Z_{\mathsf{BEC}}$, and $Z_{\mathsf{BSC}}$ are verified easily from \eqref{eq:dist}.
\end{proof}

\section{Extremality Results for the Polarization Transformations}

\subsection{Basic Polarization Transformations}
In~\cite{1669570}, a low complexity code construction that achieves the symmetric capacity of B-DMCs is given based on the recursive application of two basic channel transformations.
These transforms, usually refered as the minus and plus transformations, synthesize two new channels by combining two independent copies of a given channel. The transition probabilities of the new channels
are defined in terms of the initial one by the definitions given in \cite[Eqs.~(19), (20)]{1669570}.

Instead of identical copies of a given channel, we propose to combine two independent copies of different B-DMCs in a similar way. We denote by
$W_{1,2}^{-}: \mathcal{X} \rightarrow \mathcal{Y}^{2}$ and $W_{1,2}^{+}: \mathcal{X} \rightarrow \mathcal{Y}^{2}\times\mathcal{X}$ the synthesized channels obtained by combining independent copies of the channels $W_1$ and $W_2$. 
In this case, the transition probabilities can be defined by
\begin{align} 
 \label{align:trans1} &W_{1,2}^{-}(y_1 y_2 \mid u_1) = \displaystyle\sum_{u_2\in\mathcal{X}} \frac{1}{2} W_1(y_1 \mid u_1 \oplus u_2) W_2(y_2 \mid u_2) \\
 \label{align:trans2} &W_{1,2}^{+}(y_1 y_2 u_1\mid u_2) = \frac{1}{2} W_1(y_1 \mid u_1 \oplus u_2) W_2(y_2 \mid u_2). 
\end{align}
The following two lemmas express the $E_0$ parameter of the synthesized channels $W_{1,2}^{-}$, and $W_{1,2}^{+}$ in terms of the representation given in Lemma \ref{lem:basic}, relating them to the $E_0$ parameters of the channels $W_1$ and $W_2$.

\begin{lemma}\label{lem:minus}
Given two B-DMCs $W_1$, $W_2$, and $\rho\geq0$, let $Z_1$ and $Z_2$ be independent RVs such that
\begin{equation*}
  E_0( \rho, W_1) = -\log{\Expt\left[ g(\rho, Z_1)\right] }   \quad \text{and} \quad E_0( \rho, W_2) = -\log{\Expt\left[ g(\rho, Z_2)\right] } 
\end{equation*}
hold as defined in Lemma~\ref{lem:basic}. Then,
\begin{equation}\label{eq:Eominus}
 E_0( \rho, W_{1,2}^{-}) = -\log{\Expt\left[ g(\rho, Z_1 Z_2)\right] }   
\end{equation}
where $g(\rho, z)$ is given by \eqref{eq:g}. 
\end{lemma}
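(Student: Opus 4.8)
The plan is to reduce the claim to the elementary decomposition of a B-DMC used in the proof of Lemma~\ref{lem:basic}, followed by a direct substitution into \eqref{eq:E0}. First I would write each channel in the form $W_i(y\mid 0)=q_i(y)[1+\Delta_i(y)]$ and $W_i(y\mid 1)=q_i(y)[1-\Delta_i(y)]$ with $q_i=q_{W_i}$ and $\Delta_i=\Delta_{W_i}$ as in \eqref{eq:dist}, so that by the hypothesis ``as defined in Lemma~\ref{lem:basic}'' we have $Z_i=|\Delta_i(Y_i)|$ for independent $Y_i\sim q_i$. Substituting this into \eqref{align:trans1} and expanding the sum over $u_2\in\{0,1\}$, the products $(1\pm\Delta_1)(1\pm\Delta_2)$ combine so that the terms linear in $\Delta_1$ and in $\Delta_2$ cancel, leaving $W_{1,2}^{-}(y_1y_2\mid 0)=q_1(y_1)q_2(y_2)\,[1+\Delta_1(y_1)\Delta_2(y_2)]$ and $W_{1,2}^{-}(y_1y_2\mid 1)=q_1(y_1)q_2(y_2)\,[1-\Delta_1(y_1)\Delta_2(y_2)]$. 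In the language of \eqref{eq:dist}, the minus transform has output weight $q_{W_{1,2}^-}(y_1y_2)=q_1(y_1)q_2(y_2)$ and parameter $\Delta_{W_{1,2}^-}(y_1y_2)=\Delta_1(y_1)\Delta_2(y_2)$.

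Next I would plug these two expressions into the definition \eqref{eq:E0} of $E_0(\rho,W_{1,2}^-)$. Inside the bracket $[\tfrac12 W_{1,2}^-(y_1y_2\mid0)^{1/(1+\rho)}+\tfrac12 W_{1,2}^-(y_1y_2\mid1)^{1/(1+\rho)}]^{1+\rho}$ the common factor $(q_1(y_1)q_2(y_2))^{1/(1+\rho)}$ pulls out of each summand, and after the bracket is raised to the power $1+\rho$ this contributes exactly $q_1(y_1)q_2(y_2)$, while the rest is $g(\rho,\Delta_1(y_1)\Delta_2(y_2))$ by \eqref{eq:g}. Summing over $y_1,y_2$ against the product weight $q_1(y_1)q_2(y_2)$ then gives $E_0(\rho,W_{1,2}^-)=-\log\Expt[g(\rho,\Delta_1(Y_1)\Delta_2(Y_2))]$.

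Finally I would note that $g(\rho,z)$ is even in $z$ — swapping the two summands inside \eqref{eq:g} shows $g(\rho,-z)=g(\rho,z)$ — so $g(\rho,\Delta_1\Delta_2)=g(\rho,|\Delta_1|\,|\Delta_2|)$; and since $Y_1$ and $Y_2$ are independent, $Z_1=|\Delta_1(Y_1)|$ and $Z_2=|\Delta_2(Y_2)|$ are independent with $Z_1Z_2=|\Delta_1(Y_1)\Delta_2(Y_2)|$. Hence $E_0(\rho,W_{1,2}^-)=-\log\Expt[g(\rho,Z_1Z_2)]$, which is \eqref{eq:Eominus}. No step here is genuinely difficult; the one place to be careful is the algebra of the first step — verifying that the expansion of \eqref{align:trans1} really collapses to the stated product form — and checking that nothing in the argument requires $\rho\le 1$, since although the representation in Lemma~\ref{lem:basic} is stated for $\rho\in[0,1]$, the decomposition \eqref{eq:dist} and all the manipulations above are valid for every $\rho\ge 0$.
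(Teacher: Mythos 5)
Your proposal is correct and follows essentially the same route as the paper's proof: decompose each channel via $q_{W_i}$ and $\Delta_{W_i}$ from \eqref{eq:dist}, expand \eqref{align:trans1} so the cross terms cancel and the bracket collapses to $g(\rho,\Delta_1\Delta_2)$ weighted by $q_1q_2$, then identify $Z_i=\lvert\Delta_{W_i}(Y_i)\rvert$. Your explicit remarks on the evenness of $g$ and on the validity for all $\rho\geq 0$ merely make precise steps the paper leaves implicit.
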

\begin{proof}
From the definition of the channel $W_{1,2}^{-}$ in \eqref{align:trans1}, we can write
\begin{align}
 E_0( \rho, W_{1,2}^{-}) = -\log\displaystyle\sum_{y_1, y_2} &\bigg[ \hspace{2mm} \frac{1}{2} W_{1,2}^{-}(y_1, y_2 \mid 0)^{\frac{1}{1+\rho}} + \frac{1}{2} W_{1,2}^{-}(y_1, y_2 \mid 1)^{\frac{1}{1+\rho}} \hspace{2mm} \bigg]^{1+\rho} \nonumber \\
= -\log\displaystyle\sum_{y_1, y_2} &\bigg[ \hspace{2mm} \frac{1}{2} \left( \frac{1}{2} W_1(y_1 \mid 0) W_2(y_2 \mid 0) +  \frac{1}{2} W_1(y_1 \mid 1) W_2(y_2 \mid 1) \right)^{\frac{1}{1+\rho}} \nonumber \\
&+ \frac{1}{2} \left( \frac{1}{2} W_1(y_1 \mid 1) W_2(y_2 \mid 0) +  \frac{1}{2} W_1(y_1 \mid 0) W_2(y_2 \mid 1) \right)^{\frac{1}{1+\rho}} \bigg]^{1+\rho} \nonumber \\
= -\log \displaystyle\sum_{y_1 y_2} &\left[ \hspace{2mm} \frac{1}{2} \left(\frac{1}{2}\right)^{\frac{1}{1+\rho}} q_{W_1}\displaystyle \left(y_1\right)^{\frac{1}{1+\rho}} q_{W_2}\left(y_2\right)^{\frac{1}{1+\rho}} \right. \nonumber \\
& \hspace{4mm} \bigl( \left( 1 + \Delta_{W_1}\left(y_1\right)\right)  \left( 1 + \Delta_{W_2}\left(y_2\right)\right)  + \left( 1 - \Delta_{W_1}\left(y_1\right)\right)  \left( 1 - \Delta_{W_2}\left(y_2\right)\right) \bigr) ^{\frac{1}{1+\rho}} \nonumber \\
& + \displaystyle  \bigl( \left( 1 - \Delta_{W_1}\left(y_1\right)\right)  \left( 1 + \Delta_{W_2}\left(y_2\right)\right) + \left( 1 + \Delta_{W_1}\left(y_1\right)\right)  \left( 1 - \Delta_{W_2}\left(y_2\right)\right)\bigr)^{\frac{1}{1+\rho}} \hspace{2mm} \bigg]^{1+\rho} \nonumber  \\
= -\log \displaystyle\sum_{y_1 y_2} &\hspace{2mm} q(y_1) \hspace{2mm} q(y_2) \hspace{2mm} \left[\frac{1}{2} \bigl(1 + \Delta_{W_1}(y_1) \Delta_{W_2}(y_2)\bigr)^{\frac{1}{1+\rho}} + \frac{1}{2} \bigl(1-\Delta_{W_1}(y_1)\Delta_{W_2}(y_2)\bigr)^{\frac{1}{1+\rho}}\right]^{1+\rho} \nonumber
\end{align}
where we used the definitions in \eqref{eq:dist}. We can now define $Z_1 = \lvert\Delta_{W_1}(Y_1)\rvert$ and $Z_2 = \lvert\Delta_{W_2}(Y_2)\rvert$ 
where $Y_1$ and $Y_2$ are independent random variables with distribution $q_{W_1}$ and $q_{W_2}$, respectively.
From this construction, the lemma follows. 
\end{proof}

\begin{lemma}\label{lem:plus}
Given two B-DMCs $W_1$, $W_2$, and $\rho\geq0$, let $Z_1$ and $Z_2$ be as in Lemma~\ref{lem:minus}.  Then,
\begin{equation}\label{eq:Eoplus}
E_0( \rho, W_{1,2}^{+}) = -\log{\Expt{\left[ \hspace{2mm} \frac{1}{2} \bigl(1 + Z_1 Z_2\bigr)\hspace{2mm} g\Bigl( \rho, \frac{Z_1 + Z_2}{1 + Z_1 Z_2}\Bigr) +\frac{1}{2} \bigl(1 - Z_1 Z_2\bigr)\hspace{2mm} g\Bigl( \rho, \frac{Z_1 - Z_2}{1 - Z_1 Z_2}\Bigr) \hspace{2mm} \right]}} 
\end{equation}
where $g(\rho, z)$ is given by \eqref{eq:g}. 
\end{lemma}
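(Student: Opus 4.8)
The plan is to run exactly the computation of Lemma~\ref{lem:minus}, but for $W_{1,2}^{+}$ and with one extra summation index. Starting from \eqref{eq:E0} applied to $W_{1,2}^{+}$, I would substitute the transition probabilities \eqref{align:trans2}, namely $W_{1,2}^{+}(y_1 y_2 u_1\mid 0)=\tfrac12 W_1(y_1\mid u_1)W_2(y_2\mid 0)$ and $W_{1,2}^{+}(y_1 y_2 u_1\mid 1)=\tfrac12 W_1(y_1\mid u_1\oplus 1)W_2(y_2\mid 1)$. Carrying out the sum over $u_1\in\{0,1\}$ splits the expression into a $u_1=0$ piece, in which the two products appearing inside the bracket are $W_1(y_1\mid0)W_2(y_2\mid0)$ and $W_1(y_1\mid1)W_2(y_2\mid1)$, and a $u_1=1$ piece, with products $W_1(y_1\mid1)W_2(y_2\mid0)$ and $W_1(y_1\mid0)W_2(y_2\mid1)$. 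Pulling the factor $(\tfrac12)^{1/(1+\rho)}$ out of each product (leaving a clean $\tfrac12$ after the outer $(1+\rho)$-th power) and inserting the decomposition $W_i(y\mid0)=q_{W_i}(y)[1+\Delta_{W_i}(y)]$, $W_i(y\mid1)=q_{W_i}(y)[1-\Delta_{W_i}(y)]$ from \eqref{eq:dist}, each piece becomes $q_{W_1}(y_1)q_{W_2}(y_2)$ times a bracketed term.

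The crucial algebraic step, the analogue of the identity used in Lemma~\ref{lem:minus}, is
\[
(1\pm\Delta_1)(1\pm\Delta_2)=(1+\Delta_1\Delta_2)\Bigl(1\pm\tfrac{\Delta_1+\Delta_2}{1+\Delta_1\Delta_2}\Bigr),\qquad (1\mp\Delta_1)(1\pm\Delta_2)=(1-\Delta_1\Delta_2)\Bigl(1\mp\tfrac{\Delta_1-\Delta_2}{1-\Delta_1\Delta_2}\Bigr),
\]
where $\Delta_i$ abbreviates $\Delta_{W_i}(y_i)$. These let me factor $(1+\Delta_1\Delta_2)^{1/(1+\rho)}$ out of the $u_1=0$ bracket and $(1-\Delta_1\Delta_2)^{1/(1+\rho)}$ out of the $u_1=1$ bracket; after raising to the power $1+\rho$ the two pieces read, by the definition \eqref{eq:g} of $g$,
\[
\tfrac12\,q_{W_1}(y_1)q_{W_2}(y_2)(1+\Delta_1\Delta_2)\,g\!\left(\rho,\tfrac{\Delta_1+\Delta_2}{1+\Delta_1\Delta_2}\right)\quad\text{and}\quad \tfrac12\,q_{W_1}(y_1)q_{W_2}(y_2)(1-\Delta_1\Delta_2)\,g\!\left(\rho,\tfrac{\Delta_1-\Delta_2}{1-\Delta_1\Delta_2}\right).
\]
Here one should note $\bigl|\tfrac{\Delta_1\pm\Delta_2}{1\pm\Delta_1\Delta_2}\bigr|\le 1$, since this is equivalent to $(1-\Delta_1^2)(1-\Delta_2^2)\ge 0$, so $g$ is always evaluated on $[-1,1]$; the degenerate case $\Delta_1\Delta_2=\pm1$ is covered by continuity, the offending term then vanishing. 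Summing the two pieces over $u_1$ and then over $y_1,y_2$ against the product measure $q_{W_1}\otimes q_{W_2}$ rewrites the argument of $-\log$ as $\Expt\bigl[\tfrac12(1+\Delta_1\Delta_2)g(\rho,\tfrac{\Delta_1+\Delta_2}{1+\Delta_1\Delta_2})+\tfrac12(1-\Delta_1\Delta_2)g(\rho,\tfrac{\Delta_1-\Delta_2}{1-\Delta_1\Delta_2})\bigr]$, with $\Delta_i=\Delta_{W_i}(Y_i)$ and $Y_1,Y_2$ independent, distributed as $q_{W_1},q_{W_2}$.

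The final step is to replace the signed $\Delta_i$ by $Z_i=|\Delta_{W_i}(Y_i)|$, the random variables supplied by Lemma~\ref{lem:basic}. This is legitimate because the bracketed functional is invariant under $\Delta_1\mapsto-\Delta_1$ and under $\Delta_2\mapsto-\Delta_2$: since $g(\rho,z)=g(\rho,-z)$ by \eqref{eq:g}, flipping the sign of $\Delta_1$ swaps $1+\Delta_1\Delta_2\leftrightarrow 1-\Delta_1\Delta_2$ and turns $\tfrac{\Delta_1+\Delta_2}{1+\Delta_1\Delta_2}$ into $-\tfrac{\Delta_1-\Delta_2}{1-\Delta_1\Delta_2}$, so it merely interchanges the two summands, and likewise for $\Delta_2$. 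Hence the functional depends on $(\Delta_1,\Delta_2)$ only through $(|\Delta_1|,|\Delta_2|)=(Z_1,Z_2)$, and the expectation is unchanged upon substituting $Z_i$ for $\Delta_i$; as $Z_1,Z_2$ are independent with the distributions of Lemma~\ref{lem:basic}, this is precisely \eqref{eq:Eoplus}. I do not expect a genuine obstacle here: the argument is a slightly longer twin of the proof of Lemma~\ref{lem:minus}, and the only point requiring care is the sign-symmetry that legitimizes passing from $\Delta_i$ to $Z_i$.
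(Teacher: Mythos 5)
Your proposal is correct and follows essentially the same route as the paper: expand $E_0(\rho,W_{1,2}^{+})$ via \eqref{align:trans2}, sum over $u_1$, factor through $q_{W_i}$ and $\Delta_{W_i}$ using the identities $(1\pm\Delta_1)(1\pm\Delta_2)=(1+\Delta_1\Delta_2)\bigl(1\pm\tfrac{\Delta_1+\Delta_2}{1+\Delta_1\Delta_2}\bigr)$, and then pass to $Z_i=\lvert\Delta_{W_i}(Y_i)\rvert$ by observing that a sign flip of either $\Delta_i$ merely interchanges the two summands (the paper phrases this as the same-sign/opposite-sign case check, which is your invariance argument in different words). Your added remarks on $\bigl\lvert\tfrac{\Delta_1\pm\Delta_2}{1\pm\Delta_1\Delta_2}\bigr\rvert\le 1$ and the degenerate case $\Delta_1\Delta_2=\pm1$ are fine minor refinements, not deviations.
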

\begin{proof}
From the definition of channel $W^{+}$ in \eqref{align:trans2}, we can write
\begin{align}
E_0( \rho, W_{1,2}^{+}) \nonumber \\
= -\log\displaystyle\sum_{y_1, y_2, u} &\bigg[ \hspace{2mm} \frac{1}{2} W_{1,2}^{+}(y_1, y_2, u \mid 0)^{\frac{1}{1+\rho}} + \frac{1}{2} W_{1,2}^{+}(y_1, y_2, u \mid 1)^{\frac{1}{1+\rho}} \hspace{2mm} \bigg]^{1+\rho} \nonumber \\
= -\log\displaystyle\sum_{y_1, y_2, u} &\bigg[ \hspace{2mm} \frac{1}{2} \left( \frac{1}{2} W_1(y_1 \mid u) W_2(y_2 \mid 0) \right)^{\frac{1}{1+\rho}} + \frac{1}{2} \left( \frac{1}{2} W_1(y_1 \mid u\oplus 1) W_2(y_2 \mid 1)\right)^{\frac{1}{1+\rho}} \bigg]^{1+\rho} \nonumber \\
= -\log\displaystyle\sum_{y_1, y_2} &\left(  \hspace{2mm} \bigg[ \hspace{2mm} \frac{1}{2} \left( \frac{1}{2} W_1(y_1 \mid 0) W_2(y_2 \mid 0)\right)^{\frac{1}{1+\rho}}  +  \frac{1}{2} \left( \frac{1}{2} W_1(y_1 \mid 1) W_2(y_2 \mid 1) \right)^{\frac{1}{1+\rho}}  \hspace{2mm} \bigg]^{1+\rho}\right.  \nonumber \\
&+ \left. \bigg[ \hspace{2mm} \frac{1}{2} \left( \frac{1}{2} W_1(y_1 \mid 1) W_2(y_2 \mid 0)\right)^{\frac{1}{1+\rho}}  +  \frac{1}{2} \left( \frac{1}{2} W_1(y_1 \mid 0) W_2(y_2 \mid 1) \right)^{\frac{1}{1+\rho}}  \hspace{2mm} \bigg]^{1+\rho} \right).  \nonumber 
\end{align}
Using \eqref{eq:dist}, we have
\begin{align}
&E_0(\rho, W_{1,2}^{+}) \nonumber \\
= &-\log \displaystyle\sum_{y_1 y_2} \hspace{2mm} \frac{1}{2} \hspace{2mm} q_{W_1}(y_1) \hspace{2mm} q_{W_2}(y_2) \nonumber \\
&\hspace{16mm} \left( \hspace{2mm} \bigg[ \hspace{2mm} \bigl(\left( 1 + \Delta_{W_1}(y_1)\right) \left( 1 + \Delta_{W_2}(y_2)\right)  \bigr)^{\frac{1}{1+\rho}} +  \bigl(\left( 1 - \Delta_{W_1}(y_1)\right) \left( 1 - \Delta_{W_2}(y_2)\right)  \bigr)^{\frac{1}{1+\rho}} \hspace{2mm} \bigg]^{1+\rho} \right.  \nonumber \\
& \hspace{16mm} + \left. \bigg[ \hspace{2mm} \bigl(\left( 1 - \Delta_{W_1}(y_1)\right) \left( 1 + \Delta_{W_2}(y_2)\right)  \bigr)^{\frac{1}{1+\rho}} +  \bigl(\left( 1 - \Delta_{W_1}(y_1)\right) \left( 1 + \Delta_{W_2}(y_2)\right)  \bigr)^{\frac{1}{1+\rho}} \hspace{2mm} \bigg]^{1+\rho} \right) \nonumber \\
= &-\log \left( \hspace{2mm} \displaystyle\sum_{y_1 y_2} \hspace{2mm} \frac{1}{2} \hspace{2mm} q_{W_1}(y_1) \hspace{2mm} q_{W_2}(y_2) \hspace{2mm} \bigl( 1 + \Delta_{W_1}(y_1) \Delta_{W_2}(y_2)\bigr) \right.\nonumber\\
&\hspace{22mm} \bigg[ \hspace{2mm} \frac{1}{2} \left(1 + \frac{\Delta_{W_1}(y_1) + \Delta_{W_2}(y_2)}{1 + \Delta_{W_1}(y_1)\Delta_{W_2}(y_2)}\right)^{\frac{1}{1+\rho}} + \frac{1}{2} \left(1 - \frac{\Delta_{W_1}(y_1) + \Delta_{W_2}(y_2)}{1 + \Delta_{W_1}(y_1)\Delta_{W_2}(y_2)}\right)^{\frac{1}{1+\rho}}\bigg]^{1+\rho}   \nonumber \\
&\hspace{10mm} + \hspace{2mm} \displaystyle\sum_{y_1 y_2} \hspace{2mm} \frac{1}{2} \hspace{2mm} q_{W_1}(y_1) \hspace{2mm} q_{W_2}(y_2) \hspace{2mm} \bigl( 1 - \Delta_{W_1}(y_1) \Delta_{W_2}(y_2)\bigr) \nonumber \\ 
&\hspace{22mm} \left. \bigg[ \hspace{2mm} \frac{1}{2} \left(1 + \frac{\Delta_{W_1}(y_1) - \Delta_{W_2}(y_2)}{1 - \Delta_{W_1}(y_1)\Delta_{W_2}(y_2)}\right)^{\frac{1}{1+\rho}} + \frac{1}{2} \left(1 - \frac{\Delta_{W_1}(y_1) - \Delta_{W_2}(y_2)}{1 - \Delta_{W_1}(y_1)\Delta_{W_2}(y_2)}\right)^{\frac{1}{1+\rho}}\bigg]^{1+\rho}  \right)   \nonumber \\
&= -\log \left( \hspace{2mm} \displaystyle\sum_{y_1 y_2} \hspace{2mm} \frac{1}{2} \hspace{2mm} q_{W_1}(y_1) \hspace{2mm} q_{W_2}(y_2) \hspace{2mm} \bigl( 1 + \Delta_{W_1}(y_1) \Delta_{W_2}(y_2)\bigr) \hspace{2mm} g\Bigl(\rho, \frac{\Delta_{W_1}(y_1) + \Delta_{W_2}(y_2)}{1 + \Delta_{W_1}(y_1) \Delta_{W_2}(y_2)}\Bigr) \right.\nonumber \\ 
&\hspace{14mm} + \hspace{2mm} \displaystyle\sum_{y_1 y_2} \hspace{2mm} \left. \frac{1}{2} \hspace{2mm} q_{W_1}(y_1) \hspace{2mm} q_{W_2}(y_2) \hspace{2mm} \bigl( 1 - \Delta_{W_1}(y_1) \Delta_{W_2}(y_2)\bigr) \hspace{2mm}  g\Bigl(\rho, \frac{\Delta_{W_1}(y_1) - \Delta_{W_2}(y_2)}{1 - \Delta_{W_1}(y_1) \Delta_{W_2}(y_2)}\Bigr) \right)     \nonumber 
\end{align}
where $g(\rho, z)$ is defined in \eqref{eq:g}.\\\\
Similar to the $E_0(\rho, W_{1,2}^{-})$ case, we define $Z_1 = \lvert\Delta_{W_1}(Y_1)\rvert$ and 
$Z_2 = \lvert\Delta_{W_2}(Y_2)\rvert$ where $Y_1$ and $Y_2$ are independent random variables with distributions $q_{W_1}$ and $q_{W_2}$, respectively.
However, we should check whether this construction is equivalent to the above equation. We note that $\Delta \in [-1, 1]$. 
When $\Delta_{W_1}(y_1)$ and $\Delta_{W_2}(y_2)$ are of the same sign, we can easily see (noting that $g(\rho,z)$ is symmetric about $z=0$) that
\begin{align}
\bigl( 1 + \Delta_{W_1}(y_1) \Delta_{W_2}(y_2)\bigr)\hspace{2mm} g\Bigl( \rho, \frac{\Delta_{W_1}(y_1) + \Delta_{W_2}(y_2)}{1 + \Delta_{W_1}(y_1) \Delta_{W_2}(y_2)}\Bigr)  &=  \bigl( 1 + Z_1 Z_2\bigr)\hspace{2mm}  g\Bigl(\rho, \frac{Z_1 + Z_2}{1 + Z_1 Z_2}\Bigr)  \nonumber \\ 
\bigl( 1 - \Delta_{W_1}(y_1) \Delta_{W_2}(y_2)\bigr)\hspace{2mm} g\Bigl( \rho, \frac{\Delta_{W_1}(y_1) - \Delta_{W_2}(y_2)}{1 - \Delta_{W_1}(y_1) \Delta_{W_2}(y_2)}\Bigr)  &=  \bigl( 1 - Z_1 Z_2\bigr)\hspace{2mm}  g\Bigl(\rho, \frac{Z_1 - Z_2}{1 - Z_1 Z_2}\Bigr) \nonumber 
\end{align}
When $\Delta_{W}(y_1)$ and $\Delta_{W}(y_2)$ are of the opposite sign, we note that
\begin{align}
\bigl( 1 + \Delta_{W_1}(y_1) \Delta_{W_2}(y_2)\bigr)\hspace{2mm}  g\Bigl(\rho, \frac{\Delta_{W_1}(y_1) + \Delta_{W_2}(y_2)}{1 + \Delta_{W_1}(y_1) \Delta_{W_2}(y_2)}\Bigr) =  \bigl( 1 - Z_1 Z_2\bigr)\hspace{2mm}  g\Bigl(\rho, \frac{Z_1 - Z_2}{1 - Z_1 Z_2}\Bigr) \nonumber \\ 
\bigl( 1 - \Delta_{W_1}(y_1) \Delta_{W_2}(y_2)\bigr)\hspace{2mm}  g\Bigl(\rho, \frac{\Delta_{W_1}(y_1) - \Delta_{W_2}(y_2)}{1 - \Delta_{W_1}(y_1) \Delta_{W_2}(y_2)}\Bigr) =  \bigl( 1 + Z_1 Z_2\bigr)\hspace{2mm}  g\Bigl(\rho, \frac{Z_1 + Z_2}{1 + Z_1 Z_2}\Bigr) \nonumber 
\end{align}
Since we are interested in the sum of the above two parts, we can see that the construction we propose is still equivalent. 
This concludes the proof.
\end{proof}

\begin{remark}\label{rem::r1}
 By the symmetry of the RVs $Z_1$ and $Z_2$, we have $E_0( \rho, W_{1,2}^{\pm}) = E_0( \rho, W_{2,1}^{\pm})$.
\end{remark}

\begin{lemma}\label{lem::E0_ordering}
The channels $W_{1,2}^{-}$, $W_1$, $W_2$, and $W_{1,2}^{+}$ satisfy the following ordering:
\begin{align}\label{eq:E0_order}
 &E_0(\rho, W_{1,2}^{-}) \leq E_0(\rho, W_{1}) \leq E_0(\rho, W_{1,2}^{+}), \\
 &E_0(\rho, W_{1,2}^{-}) \leq E_0(\rho, W_{2}) \leq E_0(\rho, W_{1,2}^{+}). \nonumber
\end{align}
\end{lemma}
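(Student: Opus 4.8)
The plan is to do everything in the representation of Lemma~\ref{lem:basic} and reduce the four inequalities in \eqref{eq:E0_order} to two elementary facts about the kernel $g$. First I would record the monotonicity statement that for every $\rho\ge 0$ the map $z\mapsto g(\rho,z)$ of \eqref{eq:g} is non-increasing on $[0,1]$: with $s:=\tfrac1{1+\rho}\in(0,1]$ the bracket $\tfrac12(1+z)^s+\tfrac12(1-z)^s$ has derivative $\tfrac s2\big[(1+z)^{s-1}-(1-z)^{s-1}\big]\le 0$ there (since $s-1\le0$), and raising a nonnegative non-increasing function to the power $1+\rho>0$ preserves monotonicity. Since $-\log$ is decreasing, each inequality in \eqref{eq:E0_order} is equivalent to the reverse inequality between the two expectations furnished by Lemmas~\ref{lem:basic}, \ref{lem:minus}, \ref{lem:plus}, and by Remark~\ref{rem::r1} it suffices to compare the pairs $(W_{1,2}^{-},W_1)$ and $(W_1,W_{1,2}^{+})$, the $W_2$-statements following by swapping the indices $1\leftrightarrow 2$.

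The minus inequality is then immediate: it says $\Expt[g(\rho,Z_1Z_2)]\ge\Expt[g(\rho,Z_1)]$, and since $Z_2\in[0,1]$ we have $0\le Z_1Z_2\le Z_1$ almost surely, so monotonicity of $g(\rho,\cdot)$ gives $g(\rho,Z_1Z_2)\ge g(\rho,Z_1)$ pointwise, hence in expectation.

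For the plus inequality I would write, using Lemma~\ref{lem:plus}, $E_0(\rho,W_{1,2}^{+})=-\log\Expt[\Gamma(Z_1,Z_2)]$ with $\Gamma(z_1,z_2):=\tfrac12(1+z_1z_2)\,g\!\big(\rho,\tfrac{z_1+z_2}{1+z_1z_2}\big)+\tfrac12(1-z_1z_2)\,g\!\big(\rho,\tfrac{z_1-z_2}{1-z_1z_2}\big)$, and reduce matters to the pointwise bound $\Gamma(z_1,z_2)\le g(\rho,z_1)$ for all $z_1,z_2\in[0,1]$: conditioning on $Z_2$ (which is independent of $Z_1$) and integrating then yields $\Expt[\Gamma(Z_1,Z_2)]\le\Expt[g(\rho,Z_1)]$. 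To prove the pointwise bound I would, by continuity, restrict to $z_1,z_2\in[0,1)$ and substitute $z_i=\tanh\alpha_i$ with $\alpha_i\ge0$. The payoff of this substitution is that $g(\rho,\tanh\gamma)=(\cosh(s\gamma))^{1/s}/\cosh\gamma$ (a one-line computation from $1\pm\tanh\gamma=e^{\pm\gamma}/\cosh\gamma$), while the arguments and weights inside $\Gamma$ linearize, $\tfrac{z_1\pm z_2}{1\pm z_1z_2}=\tanh(\alpha_1\pm\alpha_2)$ and $1\pm z_1z_2=\cosh(\alpha_1\pm\alpha_2)/(\cosh\alpha_1\cosh\alpha_2)$; after substituting, $\Gamma(z_1,z_2)\le g(\rho,z_1)$ collapses to
\[
(\cosh(s(\alpha_1+\alpha_2)))^{1/s}+(\cosh(s(\alpha_1-\alpha_2)))^{1/s}\ \le\ 2\cosh\alpha_2\,(\cosh(s\alpha_1))^{1/s}.
\]
Setting $p:=1/s=1+\rho\ge1$, $\mu:=s\alpha_1$, $\nu:=s\alpha_2$, dividing by $(\cosh\mu)^p$ and using $\cosh(\mu\pm\nu)/\cosh\mu=\cosh\nu\pm t\sinh\nu$ with $t:=\tanh\mu\in[0,1)$, this is precisely $R(t)\le 2\cosh(p\nu)$ for $R(t):=(\cosh\nu+t\sinh\nu)^p+(\cosh\nu-t\sinh\nu)^p$; since $R(1)=e^{p\nu}+e^{-p\nu}=2\cosh(p\nu)$ and $R'(t)=p\sinh\nu\big[(\cosh\nu+t\sinh\nu)^{p-1}-(\cosh\nu-t\sinh\nu)^{p-1}\big]\ge0$ on $[0,1]$ (because $p\ge1$, $\nu\ge0$, and $\cosh\nu+t\sinh\nu\ge\cosh\nu-t\sinh\nu\ge e^{-\nu}>0$), $R$ is non-decreasing, so $R(t)\le R(1)$ and the bound follows.

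I expect the pointwise bound $\Gamma(z_1,z_2)\le g(\rho,z_1)$ to be the only real obstacle. Since $t\mapsto t^{1/s}$ is convex, the two natural term-by-term estimates of the summands in $\Gamma$ point in opposite directions relative to the decomposition $2\cosh\alpha_2=\cosh(\alpha_1+\alpha_2)/\cosh\alpha_1+\cosh(\alpha_1-\alpha_2)/\cosh\alpha_1$, so a head-on comparison cannot work; the hyperbolic change of variables is what makes the whole inequality collapse to the monotonicity of the single-variable function $R$ on $[0,1]$. Everything else is bookkeeping with Lemmas~\ref{lem:basic}--\ref{lem:plus} and the elementary monotonicity of $g$.
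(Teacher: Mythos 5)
Your proof is correct. The overall reduction is the same as the paper's: pass to the representation of Lemmas~\ref{lem:basic}--\ref{lem:plus}, use Remark~\ref{rem::r1} to handle $W_2$, dispose of the minus inequality by monotonicity of $g(\rho,\cdot)$ together with $Z_1Z_2\le Z_1$ (identical to the paper), and reduce the plus inequality to the pointwise bound $\Gamma(z_1,z_2)\le g(\rho,z_1)$. Where you genuinely diverge is in proving that pointwise bound. The paper observes that $\tfrac12(1+z_1z_2)$ and $\tfrac12(1-z_1z_2)$ form a probability distribution whose mixture of the two arguments $\tfrac{z_1\pm z_2}{1\pm z_1z_2}$ is exactly $z_1$, so the bound is a two-line application of Jensen's inequality using the concavity of $g(\rho,\cdot)$ in $z$ (Lemma~\ref{lem:g_concavity}); your parenthetical claim that ``a head-on comparison cannot work'' misses precisely this, since concavity of $g$, not convexity of $t\mapsto t^{1/s}$, is the right term-by-term tool. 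Your alternative --- the substitution $z_i=\tanh\alpha_i$, the identity $g(\rho,\tanh\gamma)=(\cosh(s\gamma))^{1/s}/\cosh\gamma$, and the reduction to monotonicity of $R(t)=(\cosh\nu+t\sinh\nu)^p+(\cosh\nu-t\sinh\nu)^p$ on $[0,1]$ --- checks out line by line and has the merit of using only the monotonicity of $g$, avoiding the concavity lemma entirely; it is also very much in the spirit of the hyperbolic parametrization the paper itself deploys in Appendix B for the harder Lemma~\ref{lem::conv_plus}. The price is a longer computation for a fact the paper gets almost for free from a lemma it needs anyway for Theorem~\ref{thm:extremality}. (Minor housekeeping on both sides: at $z_1z_2=1$ the second term of $\Gamma$ is interpreted as $0$ via its vanishing weight, which your continuity remark handles.)
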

\begin{proof}
We only show the inequalities in \eqref{eq:E0_order} for the channel $W_1$. The proof for the channel $W_2$ follows from Remark \ref{rem::r1}.
By Lemmas \ref{lem:basic}, \ref{lem:minus}, and \ref{lem:plus} the inequalities in \eqref{eq:E0_order} are equivalent to
\begin{align}
\label{eq::ineq_1}&\Expt\left[ \hspace{2mm} \frac{1}{2} \bigl(1 + Z_1 Z_2\bigr)\hspace{2mm} g\Bigl( \rho, \frac{Z_1 + Z_2}{1 + Z_1 Z_2}\Bigr) +\frac{1}{2} \bigl(1 - Z_1 Z_2\bigr)\hspace{2mm} g\Bigl( \rho, \frac{Z_1 - Z_2}{1 - Z_1 Z_2}\Bigr) \hspace{2mm} \right] \leq \Expt\left[ g(\rho, Z_1)\right], \\
\label{eq::ineq_2}&\Expt\left[ g(\rho, Z_1)\right] \leq \Expt\left[ g(\rho, Z_1 Z_2)\right].
\end{align}
By Lemma \ref{lem:g_concavity}, the function $g(\rho, z)$ is non-increasing in the variable $z$ when $\rho\geq 0$. Hence, the second inequality in \eqref{eq::ineq_2} holds. 
On the other side, note that for any realizations $z_1$ and $z_2$, the factors $\displaystyle\frac{1}{2}(1 + z_1 z_2)$,
and $\displaystyle\frac{1}{2}(1 - z_1 z_2)$ form a distribution. 
As the function $g(\rho, z)$ is concave in $z$ by Lemma \ref{lem:g_concavity}, we can apply Jensen's inequality to obtain
\begin{equation*}
 \frac{1}{2} \bigl(1 + z_1 z_2\bigr)\hspace{2mm} g\Bigl( \rho, \frac{z_1 + z_2}{1 + z_1 z_2}\Bigr) +\frac{1}{2} \bigl(1 - z_1 z_2\bigr)\hspace{2mm} g\Bigl( \rho, \frac{z_1 -z_2}{1 - z_1 z_2}\Bigr) 
\leq g\Bigl( \rho, \frac{z_1 + z_2}{2} + \frac{z_1 - z_2}{2}\Bigr) =  g(\rho, z_1).
\end{equation*}
Taking the expectation of both sides, we get the first inequality in \eqref{eq::ineq_1}.
\end{proof}

\begin{remark}
In \cite{E0_submartingale} it is shown that the channels $W_1$, $W_2$, $W_{1,2}^{-}$, and $W_{1,2}^{+}$ satisfy the relationship:
\begin{equation}
E_0( \rho, W_{1,2}^{+}) + E_0( \rho, W_{1,2}^{-}) \geq E_0(\rho, W_1) + E_0(\rho, W_2), \hspace{3mm} \forall\rho\geq 0. \nonumber 
\end{equation} 
\end{remark}

\subsection{Extremality for the Basic Channel Transformations}
\newtheorem{theorem}{Theorem}
\begin{theorem}\label{thm:extremality}
Given two B-DMCs $W_1$, and $W_2$, for any fixed value of $\rho \geq 0$, we define two binary symmetric channels $W_{\mathsf{BSC}}$, and $W_{\overline{\mathsf{BSC}}}$, and two binary erasure channels $W_{\mathsf{BEC}}$,
and $W_{\overline{\mathsf{BEC}}}$ through the equalities
\begin{align}
\label{eq::Equal_E0_thm2_1} E_0(\rho, W_1) = E_0(\rho, W_{\mathsf{BEC}}) = E_0(\rho, W_{\mathsf{BSC}}), \\
\label{eq::Equal_E0_thm2_2} E_0(\rho, W_2) = E_0(\rho, W_{\overline{\mathsf{BEC}}}) = E_0(\rho, W_{\overline{\mathsf{BSC}}}).
\end{align}
Then for the $W_{1,2}^{-}$ polar transformation, we have
\begin{equation}\label{eq:extremityminus}
E_0(\rho, W_{\mathsf{BEC}, \overline{\mathsf{BEC}}}^{-}) \leq E_0(\rho, W_{1,2}^{-}) \leq E_0(\rho, W_{\mathsf{BSC}, \overline{\mathsf{BSC}}}^{-}) \hspace{5mm} \forall \hspace{1mm}\rho\geq 0.
\end{equation}
For the $W^{+}$ polar transformation, we have
\begin{align}
\label{eq:extremityplus_1} E_0(\rho, W_{\mathsf{BSC},\overline{\mathsf{BSC}}}^{+}) &\leq E_0(\rho, W_{1,2}^{+}) \leq E_0(\rho, W_{\mathsf{BEC}, \overline{\mathsf{BEC}}}^{+}) \hspace{5mm} \forall \hspace{1mm}\rho\in[0, 1]\cup[2, \infty],   \\
\label{eq:extremityplus_2} E_0(\rho, W_{\mathsf{BEC}, \overline{\mathsf{BEC}}}^{+}) &\leq E_0(\rho, W_{1,2}^{+}) \leq E_0(\rho, W_{\mathsf{BSC}, \overline{\mathsf{BSC}}}^{+}) \hspace{5mm} \forall \hspace{1mm}\rho\in[1, 2].
\end{align}
\end{theorem}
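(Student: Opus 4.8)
The plan is, in the spirit of \cite{6284065}, to pass to the probabilistic representation and reduce everything to a one-dimensional convexity question. Since $E_0(\rho,\cdot)=-\log\Expt[\,\cdot\,]$ and $-\log$ is strictly decreasing, Lemmas~\ref{lem:minus} and~\ref{lem:plus} reduce the entire theorem to the following statement: among all pairs of independent $[0,1]$-valued random variables $Z_1,Z_2$ with $\Expt[g(\rho,Z_1)]$ and $\Expt[g(\rho,Z_2)]$ held fixed, the \emph{constant} laws (``BSC'') and the $\{0,1\}$-valued laws (``BEC'') extremize $\Expt[g(\rho,Z_1Z_2)]$ (for $W^{-}$) and $\Expt[\Phi_\rho(Z_1,Z_2)]$ (for $W^{+}$) in the directions asserted, where $\Phi_\rho$ denotes the kernel appearing in Lemma~\ref{lem:plus}, which is symmetric in $(Z_1,Z_2)$ because $g(\rho,\cdot)$ is even. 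The case $\rho=0$ is trivial since $g(0,\cdot)\equiv1$, so take $\rho>0$; then $(1+z)^{1/(1+\rho)}+(1-z)^{1/(1+\rho)}$ is strictly decreasing on $[0,1]$, hence so is $g(\rho,\cdot)$, which is therefore a continuous bijection of $[0,1]$ onto $[2^{-\rho},1]$ with inverse $\psi_\rho$.

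Next I would pass to the coordinate $S_i:=g(\rho,Z_i)\in[2^{-\rho},1]$, in which the constraint becomes $\Expt[S_i]=\gamma_i$ --- \emph{linear} in the law of $S_i$ --- the BSC corresponds to $S_i\equiv\gamma_i$, and the BEC to the law supported on the two endpoints $\{2^{-\rho},1\}$ with mean $\gamma_i$. The elementary extremal principle is then: for any $\phi$ on $[2^{-\rho},1]$ and any law of $S_1$ with mean $\gamma_1$, if $\phi$ is convex then $\Expt[\phi(S_1)]$ is minimized by the point mass at $\gamma_1$ (Jensen) and maximized by the two-point endpoint law ($\phi$ lies below its chord); if $\phi$ is concave the two conclusions are interchanged. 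I would apply this first to $Z_1$, with the law of $Z_2$ left arbitrary, taking $\phi(s)=\Expt_{Z_2}[g(\rho,\psi_\rho(s)\,Z_2)]$ for $W^{-}$ and $\phi(s)=\Expt_{Z_2}[\Phi_\rho(\psi_\rho(s),Z_2)]$ for $W^{+}$; since averaging over $Z_2$ preserves the convexity type, it suffices to control the sign of the second derivative of the \emph{frozen} one-dimensional kernels $s\mapsto g(\rho,\psi_\rho(s)\,z_2)$ and $s\mapsto\Phi_\rho(\psi_\rho(s),z_2)$ for each fixed $z_2\in[0,1]$. Having replaced $W_1$, I would repeat the step for $W_2$, using Remark~\ref{rem::r1} and the symmetry of the kernels; for the BEC direction this second step is actually trivial, because with $Z_1\in\{0,1\}$ both kernels become affine in $S_2$.

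Everything then rests on two convexity facts about the frozen kernels: (A) for every $z_2\in[0,1]$ and every $\rho>0$, the map $s\mapsto g(\rho,\psi_\rho(s)\,z_2)$ is convex on $[2^{-\rho},1]$; and (B) for every $z_2\in[0,1]$, the map $s\mapsto\Phi_\rho(\psi_\rho(s),z_2)$ is concave when $\rho\in[0,1]\cup[2,\infty]$ and convex when $\rho\in[1,2]$. Granting (A), the principle above forces the BSC pair to minimize and the BEC pair to maximize $\Expt[g(\rho,Z_1Z_2)]$, which is precisely~\eqref{eq:extremityminus}. Granting (B): in the concave regime the BSC pair maximizes and the BEC pair minimizes $\Expt[\Phi_\rho]$, giving~\eqref{eq:extremityplus_1}, and in the convex regime the roles reverse, giving~\eqref{eq:extremityplus_2}. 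The thresholds $\rho=1,2$ are consistent with this dichotomy: at $\rho=1$ one has $g(1,z)=\tfrac12\big(1+\sqrt{1-z^2}\,\big)$, so that $\Phi_1(z_1,z_2)=\tfrac12+\tfrac12\sqrt{(1-z_1^2)(1-z_2^2)}$ is affine in $s_1=g(1,z_1)$ --- in agreement with the identity $Z(W^{+})=Z(W_1)Z(W_2)$, which forces both inequalities in~\eqref{eq:extremityplus_1}--\eqref{eq:extremityplus_2} to be equalities there --- and the same degeneracy occurs at $\rho=2$.

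Proving (A) and (B) is the core of the matter and is a calculus problem: with $s=g(\rho,z_1)$, the second derivative of $s\mapsto K(\psi_\rho(s))$ has sign opposite to that of $\partial_{z_1}\!\big(\partial_{z_1}K/\partial_{z_1}g(\rho,z_1)\big)$, where $K=g(\rho,z_1z_2)$ for (A) and $K=\Phi_\rho(z_1,z_2)$ for (B) (the coordinate change is continuous, with a mild singularity at $s=1$ that does not affect convexity). Statement (A) ought to hold uniformly in $\rho>0$ and is the milder of the two, since $g(\rho,\cdot)$ and $g(\rho,(\cdot)\,z_2)$ share the same functional form (I have checked it at $\rho=1$). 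The real obstacle is (B): the relevant quantity is an explicit but cumbersome function of $(z_1,z_2)$ whose sign must flip exactly at $\rho=1$ and at $\rho=2$, so the argument has to be sharp enough to pin down both thresholds --- presumably by reducing, after the substitutions $\tfrac{z_1+z_2}{1+z_1z_2}$ and $\tfrac{z_1-z_2}{1-z_1z_2}$ that make the two $g$-terms of $\Phi_\rho$ transparent, to a one-variable inequality whose sign can be read off directly, with the boundary cases $z_i\in\{0,1\}$ and $\rho\in\{1,2\}$ serving as consistency checks. I expect essentially all of the difficulty to be concentrated in establishing (B).
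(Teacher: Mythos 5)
Your reduction is exactly the paper's: pass to the representation of Lemmas~\ref{lem:basic}--\ref{lem:plus}, change coordinates to $t=g(\rho,z)$ so that the constraint $E_0(\rho,W_i)$ fixed becomes a linear constraint $\Expt[g(\rho,Z_i)]=\gamma_i$, treat one variable at a time by conditioning, and then invoke Jensen (point mass, i.e.\ BSC) versus the chord over $[2^{-\rho},1]$ (endpoint law, i.e.\ BEC) according to the convexity type of the frozen kernel. Your facts (A) and (B) are precisely the paper's Lemma~\ref{lem::conv_minus} (convexity of $F_{z,\rho}(t)=g(\rho,z\,g^{-1}(\rho,t))$ for all $\rho\geq 0$) and Lemma~\ref{lem::conv_plus} (concavity of $H_{z,\rho}(t)=h(\rho,g^{-1}(\rho,t),z)$ for $\rho\in[0,1]\cup[2,\infty]$, convexity for $\rho\in[1,2]$), and your bookkeeping of directions, the affinity of both kernels in $S_2$ once $Z_1\in\{0,1\}$, and the degeneracy at $\rho=1$ (and $\rho=2$) are all correct.

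The genuine gap is that you do not prove (A) or (B); you only state them and describe the sign-of-second-derivative criterion one would have to verify, which is a restatement of the task rather than an argument. These two lemmas are where essentially all of the work in the theorem lives: the paper devotes Appendix A to (A), reducing $\partial_u\bigl(z\,g'(\rho,zu)/g'(\rho,u)\bigr)\leq 0$ to the inequality $f_1(x,k)\geq f_2(x,k)$ with $f_1(x,k)=(1+x)^k\bigl((k+1)x^2-kx+1\bigr)$, $f_2(x,k)=(1-x)^k\bigl((k+1)x^2+kx+1\bigr)$, proved via a logarithmic-derivative bound and the series for $\log\frac{1+x}{1-x}$; and Appendix B to (B), which requires the substitutions $u=\tanh(k)$, $z=\tanh(w)$ and then $a=\frac{k+w}{1+\rho}$, $b=\frac{k-w}{1+\rho}$ to reduce the required inequality to $J(\rho,a,b)=\cosh(b)^{1-\rho}\cosh\bigl(a-\rho\frac{a+b}{2}\bigr)-\cosh(a)^{1-\rho}\cosh\bigl(b-\rho\frac{a+b}{2}\bigr)\geq 0$, whose proof (monotonicity in $k'$ with the factor $(1-\rho)\bigl[\cosh(k'-w')^{-\rho}-\cosh(k'+w')^{-\rho}\bigr]\sinh((2-\rho)k')$) is exactly what produces the sign change at both thresholds $\rho=1$ and $\rho=2$ --- note the two factors flip sign at $\rho=1$ and $\rho=2$ respectively, which is the mechanism you anticipated but did not exhibit. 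Without a proof of (B) in particular, the case split in \eqref{eq:extremityplus_1}--\eqref{eq:extremityplus_2} is unsupported, so the proposal as it stands establishes the theorem only conditionally on the two unproven convexity claims.
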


\begin{proof}
We start to show the result for the minus transformation given in Equation \eqref{eq:extremityminus}.
This proof relies on the convexity result stated in the next lemma. The proof of the lemma is given in Appendix A. 
\begin{lemma}\label{lem::conv_minus}
For any $z\in[0, 1]$, and $\rho\geq 0$, the function $F_{z, \rho}(t): [2^{-\rho}, 1] \to [g(\rho, z), 1]$ defined as
\begin{equation}\label{eq:convexityminus}
 F_{z, \rho}(t) = g(\rho, z g^{-1}(\rho, t)) 
\end{equation}
where $g^{-1}(\rho, t)$ denotes the inverse of the function $g$ with respect to its second argument, is convex with respect to the variable $t$.
\end{lemma}

From Lemmas \ref{lem:basic}, and \ref{lem:minus}, we know that
\begin{align*}
&\exp\{-E_0(\rho, W_1)\} = \Expt{\left[ g(\rho, Z_1)\right]} \\
&\exp\{-E_0(\rho, W_2)\} = \Expt{\left[ g(\rho, Z_2)\right]} \\
&\exp\{-E_0(\rho, W_{1,2}^{-})\} = \Expt{\left[ g(\rho, Z_1 Z_2)\right]} 
\end{align*}
where $Z_1$ and $Z_2$ are independent random variables. We also know $Z_{\mathsf{BSC}} = z_{\mathsf{BSC}}$, $Z_{\overline{\mathsf{BSC}}} = z_{\overline{\mathsf{BSC}}}$ and 
$Z_{\mathsf{BEC}}, Z_{\overline{\mathsf{BEC}}} \in \left\lbrace 0, 1\right\rbrace$. Hence, 
\begin{equation}
 \exp\{-E_0(\rho, W_{\mathsf{BSC}, \overline{\mathsf{BSC}}}^{-})\} = g(\rho, z_{\mathsf{BSC}} z_{\overline{\mathsf{BSC}}}). \nonumber \\
\end{equation}
Given $E_0(\rho, W_1) = E_0(\rho, W_{\mathsf{BSC}})$, and $E_0(\rho, W_2) = E_0(\rho, W_{\overline{\mathsf{BSC}}})$ we also have
\begin{align*}
&\Expt\left[ g(\rho, Z_1)\right] = g(\rho, z_{\mathsf{BSC}}), \\
&\Expt\left[ g(\rho, Z_2)\right] = g(\rho, z_{\overline{\mathsf{BSC}}}).
\end{align*}
Therefore, using Jensen's inequality we obtain
\begin{align}
\exp\{-E_0(\rho, W_{1,2}^{-})\} &= \Expt_{Z_1}{\left[ \Expt_{Z_2}{\left[ F_{z_1, \rho}\left( g(\rho, Z_2)\right) \mid Z_1 = z_1\right] }\right] } \nonumber \\
&\geq \Expt_{Z_1}{ \left[ F_{Z_1, \rho}\left( \Expt_{Z_2}{ \left[ g(\rho, Z_2)\right] } \right) \right] } \nonumber \\
&= \Expt_{Z_1}{\left[ F_{Z_1, \rho} \left(g(\rho, z_{\overline{\mathsf{BSC}}})\right)\right]} \nonumber \\
&\overset{(1)}{=} \Expt_{Z_1}{\left[ F_{z_{\overline{\mathsf{BSC}}}, \rho} \left(g(\rho, Z_1)\right)\right]} \nonumber \\
&\geq  F_{z_{\overline{\mathsf{BSC}}}, \rho}\left(\Expt_{Z_1}{\left[ g(\rho, Z_1)\right] } \right) \nonumber \\
&= F_{z_{\overline{\mathsf{BSC}}}, \rho}\left(g(\rho, z_{\mathsf{BSC}}) \right) \nonumber \\
&=  \exp\{-E_0(\rho, W_{\mathsf{BSC}, \overline{\mathsf{BSC}}}^{-})\} \nonumber 
\end{align}
where $(1)$ follows by symmetry of the variables $Z_1$ and $z_{\overline{\mathsf{BSC}}}$. 

Let $\epsilon$, and $\overline{\epsilon}$ be the erasure probabilities of $W_{\mathsf{BEC}}$, and $W_{\overline{\mathsf{BEC}}}$, respectively. Then, we have $P(Z_{\mathsf{BEC}} = 0) = \epsilon$, 
$P(Z_{\overline{\mathsf{BEC}}} = 0) = \overline{\epsilon}$, and
\begin{align*}
&\exp\{-E_0(\rho, W_{\mathsf{BEC}})\} = P(Z_{\mathsf{BEC}} = 0)(1 - 2^{-\rho}) + 2^{-\rho},\\
&\exp\{-E_0(\rho, W_{\overline{\mathsf{BEC}}})\} = P(Z_{\overline{\mathsf{BEC}}} = 0)(1 - 2^{-\rho}) + 2^{-\rho}.
\end{align*}
The channel $W_{1,2}^{-}$ is a BEC with erasure probability $\epsilon + \overline{\epsilon}-\epsilon\overline{\epsilon}$, hence we get
\begin{equation}
\exp\{-E_0(\rho, W_{\mathsf{BEC}, \overline{\mathsf{BEC}}}^{-})\} = \left[P(Z_{\mathsf{BEC}} = 0) + P(Z_{\overline{\mathsf{BEC}}} = 0)- P(Z_{\mathsf{BEC}} = 0)P(Z_{\overline{\mathsf{BEC}}} = 0)\right] (1 - 2^{-\rho}) + 2^{-\rho}. \nonumber 
\end{equation}
Therefore, given $E_0(\rho, W_1) = E_0(\rho, W_{\mathsf{BEC}})$, and $E_0(\rho, W_2) = E_0(\rho, W_{\overline{\mathsf{BEC}}})$, we have
\begin{align*}
&\Expt\left[ g(\rho, Z_1)\right] =  \Expt\left[ g(\rho, Z_{\mathsf{BEC}})\right] =  P(Z_{\mathsf{BEC}} = 0) (1 - 2^{-\rho}) + 2^{-\rho}, \\
&\Expt\left[ g(\rho, Z_2)\right] =  \Expt\left[ g(\rho, Z_{\overline{\mathsf{BEC}}})\right] =  P(Z_{\overline{\mathsf{BEC}}} = 0) (1 - 2^{-\rho}) + 2^{-\rho}. 
\end{align*}
Due to convexity, we also know the following inequality holds:
\begin{equation*}
F_{z, \rho}(t) \leq (1 - t)F_{z, \rho}(0) + tF_{z, \rho}(1) = 1 + \frac{g(\rho, z) - 1}{2^{-\rho} - 1} (t - 1). 
\end{equation*}
Therefore,
\begin{align}
&\hspace{3mm}\exp\{-E_0(\rho, W_{1, 2}^{-})\} \\
&= \Expt_{Z_1}{\left[ \Expt_{Z_2}{\left[ F_{z_1, \rho}\left( g(\rho, Z_2)\right) \mid Z_1 = z_1\right] }\right] } \nonumber \\
&\leq \Expt_{Z_1}{\left[ 1 + \frac{g(\rho, Z_1) - 1}{2^{-\rho} - 1} (\Expt_{Z_2}{\left[ g(\rho, Z_2)\right]}  - 1) \right]} \nonumber  \\
&= 1 + \frac{\Expt_{Z_1}{\left[ g(\rho, Z_1) \right]  - 1}}{2^{-\rho} - 1} (\Expt_{Z_2}{\left[ g(\rho, Z_2)\right]}  - 1) \nonumber \\
&= 1 + \frac{\left[ P(Z_{\mathsf{BEC}} = 0) (1 - 2^{-\rho}) + 2^{-\rho} - 1\right]\left[ P(Z_{\overline{\mathsf{BEC}}} = 0) (1 - 2^{-\rho}) + 2^{-\rho} - 1 \right]}{2^{-\rho} - 1} \nonumber \\
&= 1 - P(Z_{\mathsf{BEC}} = 0) P(Z_{\overline{\mathsf{BEC}}} = 0)(1 - 2^{-\rho}) +  \left( P(Z_{\mathsf{BEC}} = 0) + P(Z_{\overline{\mathsf{BEC}}} = 0)\right)(1 - 2^{-\rho}) + 2^{-\rho}-1 \nonumber \\
&= \left[ P(Z_{\mathsf{BEC}} = 0) + P(Z_{\overline{\mathsf{BEC}}} = 0) - P(Z_{\mathsf{BEC}} = 0)P(Z_{\overline{\mathsf{BEC}}} = 0)\right] (1 - 2^{-\rho}) + 2^{-\rho} \nonumber \\
&= \exp\{-E_0(\rho, W_{\mathsf{BEC}, \overline{\mathsf{BEC}}}^{-})\}.  \nonumber 
\end{align}
This concludes the proof for the minus transformation. Now, we sketch the proof of the extremality property for the plus transformation.
We define the function $h(\rho, z_1, z_2)$ as 
\begin{equation}\label{eq:h}
 h(\rho, z_1, z_2) = \frac{1}{2} \bigl(1 + z_1 z_2\bigr) g\Bigl(\rho, \frac{z_1 + z_2}{1 + z_1 z_2}\Bigr) + \frac{1}{2} \bigl(1 - z_1 z_2\bigr) g\Bigl(\rho, \frac{z_1 - z_2}{1 - z_1 z_2}\Bigr) 
\end{equation}
where $z_1, z_2\in[0, 1]$, and $\rho\geq 0$. Note that $h(\rho, z_1, z_2$) is symmetric in the variables $z_1$, and $z_2$.
The proof relies on the convexity result stated in the next lemma. The proof of the lemma is given in Appendix B. 
\begin{lemma}\cite{private:Emre_Telatar}\label{lem::conv_plus}
For any $z\in[0, 1]$, and $\rho\geq 0$, the function $H_{z, \rho}(t): [2^{-\rho}, 1] \to [2^{-\rho}, g(\rho, z)]$ defined as
\begin{equation*}\label{eq:convexityplus}
H_{z, \rho}(t) = h(\rho, g^{-1}(\rho, t), z) 
\end{equation*}
is concave with respect to the variable $t$ when $\rho\in[0, 1]\cup[2, \infty]$, and convex when $\rho\in[1, 2]$.
\end{lemma}

The proof of the theorem for the plus transformation can be completed following similar steps to the minus case.
By Lemma \ref{lem:plus}, we have
\begin{equation*}
\Expt\left[h(\rho, Z_1, Z_2)\right] = \exp\{-E_0(\rho, W_{1,2}^{+})\}. 
\end{equation*}
We define the random variables
\begin{equation*}
T_1 = g(\rho, Z_1) \hspace{3mm} \hbox{and} \hspace{3mm} T_2 = g(\rho, Z_2).
\end{equation*}
Then, using the concavity of the function $H_{z, \rho}(t)$ with respect to $t$ for fixed values of $\rho\in [0, 1] \cup [2, \infty]$, and $z\in[0, 1]$, we obtain the inequalities in \eqref{eq:extremityplus_1}: 
\begin{equation*}
 \exp\{-E_0(\rho, W_{1,2}^{+})\} = \Expt\left[H_{g^{-1}(\rho, T_2), \rho}(T_1) \right] \leq h(\rho, z_{\mathsf{BSC}}, z_{\overline{\mathsf{BSC}}}) = \exp\{-E_0(\rho, W_{\mathsf{BSC}, \overline{\mathsf{BSC}}}^{+})\},
\end{equation*}
and 
\begin{multline*}
 \exp\{-E_0(\rho, W_{1.2}^{+})\} = \Expt\left[H_{g^{-1}(\rho, T_2), \rho}(T_1) \right] \geq  2^{-\rho} + P(Z_{\mathsf{BEC}} = 0)P(Z_{\overline{\mathsf{BEC}}} = 0) \left(1 - 2^{-\rho} \right) \\
= \exp\{-E_0(\rho, W_{\mathsf{BEC}, \overline{\mathsf{BEC}}}^{+})\}.
\end{multline*}
Similarly, the convexity of the function $H_{z, \rho}(t)$ with respect to $t$ for $\rho\in[1, 2]$ leads to the reverse inequalities in \eqref{eq:extremityplus_2}.
\end{proof}

\subsection{Special $\rho$ Values}
In Theorem \ref{thm:extremality}, we have shown that among all B-DMC's $W$ of fixed $E_0(\rho, W)$, the binary erasure channel's minus transformation 
results in a lower bound to any $E_0(\rho, W^{-})$ and the binary symmetric channel's one in an upper bound to any $E_0(\rho, W^{-})$. 
For the plus transformation, a similar extremality property holds except the difference that the result breaks into two parts depending on the value of the parameter $\rho$:
While the binary erasure and binary symmetric channels appear on opposite sides of the inequalities
for $E_0(\rho, W^{-})$ and $E_0(\rho, W^{+})$ when $\rho\in[0, 1]\cup[2, \infty]$, they appear on the same side when $\rho\in[1, 2]$. 
Using these results, we identify in this section some special cases of $\rho$ values to recover known, and discover new results. 

\subsubsection{$\rho = 0$, Symmetric capacity}
In \cite{1669570}, it is shown that the symmetric capacity is preserved under the basic polarization transformations. 
This property holds regardless of whether the combined channels are identical or not, as it is a consequence of the chain rule for mutual information. Namely, the channels satisfy:
\begin{equation*}
 2I(W) = I(W^{-}) + I(W^{+}). 
\end{equation*}
This relation implies the process attached to the symmetric capacities of the synthesized channels is a bounded martingale, hence converges almost surely. 
\newtheorem{corollary}{Corollary}
\begin{corollary}\label{cor::E0_polariyation_pm_diff_extremality}
Under the assumptions as Theorem~\ref{thm:extremality} with $W_1 = W_2 = W$, we have
\begin{equation*}
E_0(\rho, W_{\mathsf{BSC}}^{+}) - E_0(\rho, W_{\mathsf{BSC}}^{-})
\leq E_0(\rho, W^{+}) - E_0(\rho, W^{-}) 
\leq E_0(\rho, W_{\mathsf{BEC}}^{+}) - E_0(\rho, W_{\mathsf{BEC}}^{-}).\\
\end{equation*}
for $\rho\in[0, 1]$
\end{corollary}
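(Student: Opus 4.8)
The plan is to obtain Corollary~\ref{cor::E0_polariyation_pm_diff_extremality} directly from Theorem~\ref{thm:extremality} by specializing to $W_1=W_2=W$ and then combining the bounds for the plus and minus transformations by subtraction. First I would observe that, with $W_1=W_2=W$, the defining equalities \eqref{eq::Equal_E0_thm2_1} and \eqref{eq::Equal_E0_thm2_2} give $E_0(\rho,W_{\mathsf{BEC}})=E_0(\rho,W)=E_0(\rho,W_{\overline{\mathsf{BEC}}})$ and $E_0(\rho,W_{\mathsf{BSC}})=E_0(\rho,W)=E_0(\rho,W_{\overline{\mathsf{BSC}}})$, so that $W_{\mathsf{BEC}}$ and $W_{\overline{\mathsf{BEC}}}$ may be taken to be the same binary erasure channel and $W_{\mathsf{BSC}}$ and $W_{\overline{\mathsf{BSC}}}$ the same binary symmetric channel; consequently the synthesized channels in Theorem~\ref{thm:extremality} reduce to the ordinary Ar{\i}kan transforms $W_{\mathsf{BEC}}^{\pm}$ and $W_{\mathsf{BSC}}^{\pm}$ built from two identical copies. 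Under this identification, \eqref{eq:extremityminus} reads, for every $\rho\geq 0$,
\begin{equation*}
E_0(\rho, W_{\mathsf{BEC}}^{-}) \leq E_0(\rho, W^{-}) \leq E_0(\rho, W_{\mathsf{BSC}}^{-}),
\end{equation*}
while \eqref{eq:extremityplus_1}, which holds in particular for $\rho\in[0,1]$, reads
\begin{equation*}
E_0(\rho, W_{\mathsf{BSC}}^{+}) \leq E_0(\rho, W^{+}) \leq E_0(\rho, W_{\mathsf{BEC}}^{+}).
\end{equation*}

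Next, for $\rho\in[0,1]$ I would just add these inequalities in the right pairing. Combining the upper bound $E_0(\rho,W^{+})\leq E_0(\rho,W_{\mathsf{BEC}}^{+})$ with the lower bound $E_0(\rho,W^{-})\geq E_0(\rho,W_{\mathsf{BEC}}^{-})$ yields
\begin{equation*}
E_0(\rho, W^{+}) - E_0(\rho, W^{-}) \leq E_0(\rho, W_{\mathsf{BEC}}^{+}) - E_0(\rho, W_{\mathsf{BEC}}^{-}),
\end{equation*}
and symmetrically combining $E_0(\rho,W^{+})\geq E_0(\rho,W_{\mathsf{BSC}}^{+})$ with $E_0(\rho,W^{-})\leq E_0(\rho,W_{\mathsf{BSC}}^{-})$ yields
\begin{equation*}
E_0(\rho, W_{\mathsf{BSC}}^{+}) - E_0(\rho, W_{\mathsf{BSC}}^{-}) \leq E_0(\rho, W^{+}) - E_0(\rho, W^{-}),
\end{equation*}
which together give the claim.

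There is essentially no obstacle here: all the analytic work sits in Theorem~\ref{thm:extremality}, and the corollary is merely the remark that on $\rho\in[0,1]$ the BEC and BSC appear on opposite sides of the bounds for $W^{+}$ and for $W^{-}$, so the two gaps telescope into a bound on the difference. I would, however, add a sentence explaining why the restriction to $\rho\in[0,1]$ is the natural one: on $[1,2]$ the relevant plus-transformation bound is \eqref{eq:extremityplus_2}, in which the BEC and BSC swap sides relative to \eqref{eq:extremityminus}, so the plus and minus extremes no longer combine to control $E_0(\rho,W^{+})-E_0(\rho,W^{-})$. Finally, dividing the whole chain by $\rho$ and letting $\rho\to 0$ via \eqref{eq::I_def} recovers the classical extremes-of-information-combining fact that, among channels of a given symmetric capacity, the BEC and BSC respectively maximize and minimize $I(W^{+})-I(W^{-})$.
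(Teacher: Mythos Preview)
Your proposal is correct and matches the paper's treatment: the corollary is stated without a separate proof precisely because it follows from Theorem~\ref{thm:extremality} by specializing to $W_1=W_2=W$ and subtracting the plus and minus bounds, exactly as you outline. Your additional remarks on why $\rho\in[0,1]$ is the natural range and on the limit $\rho\to 0$ also mirror the discussion the paper gives immediately after the corollary.
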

Corollary \ref{cor::E0_polariyation_pm_diff_extremality} shows that amongst channels $W$ with a given value of
$E_0(\rho, W)$ for a given $\rho$ the BEC and BSC are the most and least
polarizing under Ar{\i}kan's polar transformations in the sense that their
polar transforms $W^+$ and $W^-$ has the largest and smallest difference
in their $E_0$ values. Dividing all sides of the inequality above by $\rho$ and taking the limit
as $\rho\to0$, we see that among channels of a given symmetric capacity,
the BEC and BSC are extremal with respect to the polarization transformations,
in the sense that
\begin{equation*}
     I(W_{\mathsf{BSC}}^{+}) - I(W_{\mathsf{BSC}}^{-})
\leq I(W^{+}) - I(W^{-})
\leq I(W_{\mathsf{BEC}}^{+}) - I(W_{\mathsf{BEC}}^{-}).
\end{equation*}
This is a known argument proving the convergence is to the extremes of the $[0, 1]$ interval. 
The preservation property of the symmetric capacities holds regardless of whether the combined channels are identical or not, as it is a consequence of the chain rule for mutual information. 
Namely, the channels satisfy:
\begin{equation*}
 I(W_1) + I(W_2) = I(W_{1, 2}^{-}) + I(W_{1, 2}^{+}),
\end{equation*}
and Theorem \ref{thm:extremality} can be used to show the convergence is also to the extremes values $\{0, 1\}$ of the corresponding bounded martingale process.

\begin{remark}
These inequalities for the symmetric capacities can also be obtained by the results on the extremes of information combining \cite{1412027}, 
together with the fact that symmetric capacity is preserved under the polarization transformations \cite{1669570}.
\end{remark}

\subsubsection{$\rho = 1,$ Cut-off rate, Bhatthacharyya parameter}
Another result of \cite{1669570} can be recovered by letting $\rho=1$. In this case, Theorem \ref{thm:extremality} implies channels having equal cut-off rates satisfy
\begin{align*}
 E_0(1, W_{\mathsf{BEC}}^{-}) \leq E_0(1, W^{-}) \leq E_0(1, W_{\mathsf{BSC}}^{-}), \\
 E_0(1, W_{\mathsf{BSC}}^{+}) = E_0(1, W^{+}) =  E_0(1, W_{\mathsf{BEC}}^{+}).
\end{align*}
Moreover, by the definition in Equation \eqref{eq::Z_def}, the extremalities for the Bhattacharyya parameter are also obtained. Indeed, we know $Z(W^{+}) = Z(W)^2$ by \cite{1669570}.
\subsubsection{$\rho = 2$}
A previously unknown result is found by taking $\rho = 2$ in the theorem. Similar to the case $\rho = 1$, we observe the $E_0$ parameter of the channels $W^{+}, W_{\mathsf{BEC}}^{+}, W_{\mathsf{BSC}}^{+}$ are 
equal to each other. 

\subsection{Generalizations of the Bhatthacharyya parameter}
In this section, we discuss a generalization to the definition of the Bhattacharyya parameter. 
We propose an extension motivated by the $E_0$ parameter of BECs. Given a BEC $W_{\mathsf{BEC}}$ with erasure probability $\epsilon_{\mathsf{bec}}$, we have
\begin{equation}
\epsilon_{\mathsf{bec}} = \frac{2^{\rho} 2^{-E_0(\rho, W_{\mathsf{BEC}})} - 1}{2^{\rho} - 1}. \nonumber 
\end{equation}
We also know the Bhattacharyya parameter of a binary erasure channel satisfies $Z(W_{\mathsf{BEC}}) = \epsilon_{\mathsf{bec}}$. 
This parameter provides tighter bounds than $E_0(1, W)$ in~\cite{1669570}, and is used in the subsequent analysis. This gives the idea
to define a similar quantity to $Z(W)$, referred as $Z(\rho, W)$, which reflects the dependence on the value of $\rho$  
\begin{equation}
Z(\rho, W) = \frac{2^{\rho} 2^{-E_0(\rho, W)} - 1}{2^{\rho} - 1}. \nonumber
\end{equation}
Using the results we derived in the previous section, the next Corollary shows how $Z(\rho, W)$ is affected by the basic channel transformations.
\begin{corollary}\label{cor::Zs}
Given a B-DMC $W$, for any fixed value of $\rho \geq 0$, we define a binary symmetric channel $W_{\mathsf{BSC}}$, and a binary erasure channel $W_{\mathsf{BEC}}$ through
the equality 
\begin{equation}
 Z(\rho, W) = Z(\rho, W_{\mathsf{BEC}}) = Z(\rho, W_{\mathsf{BSC}}) \nonumber
\end{equation}
Then for the $W^{-}$ and $W^{+}$ polar transformations, we have
\begin{align}
Z(\rho, W_{\mathsf{BSC}}^{-}) \hspace{2mm} \leq \hspace{2mm} &Z(\rho, W^{-}) \hspace{2mm} \leq \hspace{2mm} Z(\rho, W_{\mathsf{BEC}}^{-}) = 2 Z(\rho, W_{\mathsf{BEC}}) - Z(\rho, W_{BEC})^2, \nonumber \\
Z(\rho, W_{\mathsf{BEC}})^2 = Z(\rho, W_{\mathsf{BEC}}^{+}) \hspace{2mm} \leq \hspace{2mm} &Z(\rho, W^{+}) \hspace{2mm} \leq \hspace{2mm} Z(\rho, W_{\mathsf{BSC}}^{+})  \hspace{8mm} \forall \hspace{1mm}\rho\in[0, 1]\cup[2, \infty], \nonumber \\
Z(\rho, W_{\mathsf{BSC}}^{+}) \hspace{2mm} \leq \hspace{2mm} &Z(\rho, W^{+}) \hspace{2mm} \leq \hspace{2mm} Z(\rho, W_{\mathsf{BEC}}^{+}) = Z(\rho, W_{\mathsf{BEC}})^2  \hspace{2mm} \forall \hspace{1mm}\rho\in[1, 2].
\end{align}
\end{corollary}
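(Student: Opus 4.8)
The plan is to notice that $Z(\rho,W)$ is, for each fixed $\rho$, a strictly monotone function of $E_0(\rho,W)$, so that Corollary~\ref{cor::Zs} is just Theorem~\ref{thm:extremality} rewritten in the variable $Z(\rho,\cdot)$, supplemented by two one-line computations of how a binary erasure channel transforms. Write $u(W):=\exp\{-E_0(\rho,W)\}$ and $\phi_\rho(u):=\frac{2^{\rho}u-1}{2^{\rho}-1}$, so that $Z(\rho,W)=\phi_\rho(u(W))$ --- this is precisely the quantity already manipulated throughout the proof of Theorem~\ref{thm:extremality}. For $\rho>0$ the coefficient $2^{\rho}/(2^{\rho}-1)$ is positive, so $\phi_\rho$ is strictly increasing and affine in $u$; since $u=\exp\{-E_0\}$ is itself strictly decreasing in $E_0$, the map $E_0(\rho,W)\mapsto Z(\rho,W)$ is continuous and strictly decreasing. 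At $\rho=0$ the defining ratio is read as the limit $\lim_{\rho\to0}Z(\rho,W)$, which is again a strictly decreasing function of $I(W)=\partial_\rho E_0(\rho,W)|_{\rho=0}$, so the argument below applies verbatim.

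First I would use injectivity of this correspondence to translate the hypothesis: $Z(\rho,W)=Z(\rho,W_{\mathsf{BEC}})=Z(\rho,W_{\mathsf{BSC}})$ is equivalent to $E_0(\rho,W)=E_0(\rho,W_{\mathsf{BEC}})=E_0(\rho,W_{\mathsf{BSC}})$. Hence $W_{\mathsf{BEC}}$ and $W_{\mathsf{BSC}}$ are exactly the auxiliary channels produced by Theorem~\ref{thm:extremality} in the special case $W_1=W_2=W$; in particular the two erasure channels appearing there coincide (both have the erasure probability $\epsilon$ determined by $E_0(\rho,W)$), and likewise the two symmetric channels, so $W_{\mathsf{BEC},\overline{\mathsf{BEC}}}^{\pm}=W_{\mathsf{BEC}}^{\pm}$, $W_{\mathsf{BSC},\overline{\mathsf{BSC}}}^{\pm}=W_{\mathsf{BSC}}^{\pm}$ and $W_{1,2}^{\pm}=W^{\pm}$. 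Applying the strictly decreasing map $E_0(\rho,\cdot)\mapsto Z(\rho,\cdot)$ to the chains \eqref{eq:extremityminus}, \eqref{eq:extremityplus_1} and \eqref{eq:extremityplus_2} reverses every inequality; this is exactly what turns the BEC/BSC from lower/upper bounds on $E_0$ into upper/lower bounds on $Z$, and it reproduces the three displayed chains of the corollary on their respective $\rho$-ranges.

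It then remains to evaluate the two boundary terms explicitly. Under the minus transformation a binary erasure channel of erasure probability $\epsilon$ combined with another of erasure probability $\epsilon$ is again a binary erasure channel, now with erasure probability $\epsilon+\epsilon-\epsilon^{2}$, and under the plus transformation one obtains a binary erasure channel of erasure probability $\epsilon^{2}$ (these are the standard BEC recursions, and also follow directly from the $\{0,1\}$-valued law of $Z_{\mathsf{BEC}}$ exploited in the proof of Theorem~\ref{thm:extremality}). Since for a binary erasure channel the generalized parameter agrees with the erasure probability, $Z(\rho,W_{\mathsf{BEC}})=\epsilon$ for every $\rho$, substitution yields $Z(\rho,W_{\mathsf{BEC}}^{-})=2\epsilon-\epsilon^{2}=2Z(\rho,W_{\mathsf{BEC}})-Z(\rho,W_{\mathsf{BEC}})^{2}$ and $Z(\rho,W_{\mathsf{BEC}}^{+})=\epsilon^{2}=Z(\rho,W_{\mathsf{BEC}})^{2}$, the two equalities in the statement.

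I do not expect a genuine obstacle here: once the monotone change of variable is in place, the corollary is immediate from Theorem~\ref{thm:extremality}. The only points demanding care are bookkeeping: getting the direction of the monotonicity of $\phi_\rho$ right so that the inequalities flip the right way, handling the degenerate case $\rho=0$ where the defining ratio is $0/0$ and must be read in the limiting sense, and confirming that $E_0(\rho,W_1)=E_0(\rho,W_2)$ forces the two auxiliary erasure (resp.\ symmetric) channels of Theorem~\ref{thm:extremality} to be identical, so that the theorem may be quoted with $W_1=W_2=W$ as in Corollary~\ref{cor::E0_polariyation_pm_diff_extremality}.
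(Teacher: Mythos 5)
Your proposal is correct and is precisely the argument the paper leaves implicit when it states Corollary~\ref{cor::Zs} as a direct consequence of Theorem~\ref{thm:extremality}: since $Z(\rho,\cdot)$ is a strictly decreasing affine function of $\exp\{-E_0(\rho,\cdot)\}$, the inequality chains of the theorem (with $W_1=W_2=W$) reverse direction, and the two equalities follow from the standard BEC recursions $\epsilon\mapsto 2\epsilon-\epsilon^2$ and $\epsilon\mapsto\epsilon^2$ together with $Z(\rho,W_{\mathsf{BEC}})=\epsilon$. Your handling of the degenerate $\rho=0$ case as a limit is a reasonable reading and does not change the substance.
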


\section{Conclusions}
The extremality of the BEC and BSC for polar transforms can be interpreted in the context of information combining. 
Theorem \ref{thm:extremality} shows that even if we change the measure of information from the customary mutual
information to $E_0$ the channels BEC and BSC still remain extremal. 
The results of the theorem also show the $\rho = 1, 2$ values share a common property: 
One can recover the value of the parameter $E_0(\rho, W^{+})$ from the value of $E_0(\rho, W)$ without necessarily knowing the particular channel $W$. 
Finally, the extremality results of the theorem open up the possibility to apply the theory of channel polarization to combining arbitrary B-DMCs, 
the details of which will further be investigated in a future work.

\section{Acknowledgment}
The author would like to thank Emre Telatar for helpful discussions. 
This work was supported by Swiss National Science Foundation under grant number 200021-125347/1.

\section*{Appendices}
In these appendices, we prove in part A Lemma \ref{lem::conv_minus}, and in part B Lemma \ref{lem::conv_plus}. For the proofs, we need the following lemma.

\begin{lemma}\label{lem:g_concavity}
The function $g(\rho, z)$ defined as 
\begin{equation*}
 g(\rho, z) =  \left(  \frac{1}{2}\left(1 + z \right)^{\frac{1}{1+\rho}} + \frac{1}{2}\left(1 - z\right)^{\frac{1}{1+\rho}}  \right) ^{1+\rho},
\end{equation*}
for $z\in[0, 1]$, and $\rho\in\mathbf{R}\setminus\{-1\}$, is a concave non-increasing function in $z$ for $\rho\in(-\infty, -1)\cup[0, \infty)$, 
and a convex non-decreasing function in $z$ for $\rho\in(-1, 0]$.
\end{lemma}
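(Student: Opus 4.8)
The plan is to reduce the statement to a one-variable calculus computation by substituting $s=\tfrac{1}{1+\rho}$, so that $g(\rho,z)=\phi(z)^{1/s}$ with $\phi(z)=\tfrac12(1+z)^{s}+\tfrac12(1-z)^{s}$. Under this substitution the three parameter regimes become $\rho\in[0,\infty)\Leftrightarrow s\in(0,1]$, $\rho\in(-1,0]\Leftrightarrow s\in[1,\infty)$, and $\rho\in(-\infty,-1)\Leftrightarrow s\in(-\infty,0)$, so the claimed dichotomy is exactly ``$s\le 1$'' (concave, non-increasing) versus ``$s\ge 1$'' (convex, non-decreasing), with the borderline $s=1$, i.e.\ $\rho=0$, giving the constant $g\equiv 1$. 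For $z\in(0,1)$ one has $\phi(z)>0$ and $g$ is smooth, so I would carry out the computation on this open interval and then extend the monotonicity and convexity/concavity conclusions to the closed interval $[0,1]$ by continuity of $g$.

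For the monotonicity part I would differentiate once: a short computation gives $\partial_z g=\tfrac12\,\phi(z)^{1/s-1}\bigl[(1+z)^{s-1}-(1-z)^{s-1}\bigr]$, where the prefactor $\tfrac12\phi(z)^{1/s-1}$ is positive. Since $1+z\ge 1-z>0$ on $(0,1)$ and $t\mapsto t^{s-1}$ is increasing when $s>1$ and decreasing when $s<1$, the bracketed difference is $\ge 0$ for $s>1$ and $\le 0$ for $s<1$. Hence $g$ is non-decreasing in $z$ for $\rho\in(-1,0)$ and non-increasing for $\rho\in(-\infty,-1)\cup(0,\infty)$, the case $\rho=0$ falling under both.

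For the convexity/concavity part I would use $g=\phi^{1/s}$ to write $\partial_z^2 g=\tfrac1s\,\phi^{1/s-2}\bigl[\tfrac{1-s}{s}(\phi')^2+\phi\phi''\bigr]$ and substitute $\phi'=\tfrac{s}{2}\bigl[(1+z)^{s-1}-(1-z)^{s-1}\bigr]$ and $\phi''=\tfrac{s(s-1)}{2}\bigl[(1+z)^{s-2}+(1-z)^{s-2}\bigr]$. After factoring out $\tfrac{s(1-s)}{4}$, the bracket becomes $\bigl[(1+z)^{s-1}-(1-z)^{s-1}\bigr]^2-\bigl[(1+z)^{s}+(1-z)^{s}\bigr]\bigl[(1+z)^{s-2}+(1-z)^{s-2}\bigr]$, and the decisive algebraic step is to observe that the $(1+z)^{2s-2}$ and $(1-z)^{2s-2}$ terms cancel and the remainder collapses to $-(1+z)^{s-2}(1-z)^{s-2}\bigl((1+z)+(1-z)\bigr)^2=-4(1-z^2)^{s-2}$. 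This yields $\partial_z^2 g=-(1-s)\,\phi^{1/s-2}(1-z^2)^{s-2}$, and since $\phi^{1/s-2}>0$ and $(1-z^2)^{s-2}>0$ on $(0,1)$, the sign of $\partial_z^2 g$ equals $-\operatorname{sign}(1-s)$: negative (concave) for $s<1$, i.e.\ $\rho\in(-\infty,-1)\cup[0,\infty)$, and positive (convex) for $s>1$, i.e.\ $\rho\in(-1,0]$. Together with the previous paragraph this gives the lemma.

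The only genuinely delicate point will be the cancellation identity in the second-derivative calculation, namely that $\bigl[(1+z)^{s-1}-(1-z)^{s-1}\bigr]^2-\bigl[(1+z)^{s}+(1-z)^{s}\bigr]\bigl[(1+z)^{s-2}+(1-z)^{s-2}\bigr]$ reduces to $-4(1-z^2)^{s-2}$; everything else is bookkeeping. As a conceptual check, note that $g(\rho,z)$ is the power mean $\bigl(\tfrac{a^{s}+b^{s}}{2}\bigr)^{1/s}$ of $a=1+z$ and $b=1-z$ precomposed with the affine map $z\mapsto(1+z,1-z)$, so the convexity/concavity claim is equivalent to the classical fact that the two-variable power mean is concave on $(0,\infty)^2$ for $s\le 1$ and convex for $s\ge 1$; the direct derivative route above, however, is self-contained and simultaneously delivers the monotonicity statement.
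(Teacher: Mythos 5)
Your proposal is correct and follows essentially the same route as the paper: a direct computation of $\partial_z g$ and $\partial_z^2 g$ followed by a sign analysis in the three parameter ranges, with your substitution $s=\frac{1}{1+\rho}$ being only a cosmetic reparametrization (your closed form $\partial_z^2 g=-(1-s)\,\phi^{1/s-2}(1-z^2)^{s-2}$ is exactly the paper's $-\frac{\rho}{1+\rho}\,(1-z^2)^{\frac{1}{1+\rho}-2}\,\phi^{\rho-1}$). The only difference is that you write out the algebraic cancellation behind that second-derivative formula, which the paper states without detail.
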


\begin{proof}
Taking the first derivative with respect to $z$, we get
\begin{align}
\frac{\partial g(\rho, z)}{\partial z} 
&=  \left( \frac{1}{2}(1 + z)^{\frac{1}{1+\rho}} + \frac{1}{2}(1 - z)^{\frac{1}{1+\rho}} \right)^{\rho}  \left( \frac{1}{2}(1 + z)^{\frac{-\rho}{1+\rho}} - \frac{1}{2}(1 - z)^{\frac{-\rho}{1+\rho}} \right) \nonumber \\
&= \underbrace{\left(\frac{1}{2}\right)^{1+\rho} \left(1 + \left( \frac{1 - z}{1 + z}\right) ^{\frac{1}{1+\rho}}\right)^{\rho}}_{\geq 0} \left( 1 - \left( \frac{1 - z}{1 + z}\right) ^{\frac{-\rho}{1+\rho}}\right) \label{eq::dgu}.
\end{align}
As we have 
\begin{equation*}
\displaystyle\frac{1 - z}{1 + z} \leq 1,  
\end{equation*}
for $\forall z\in[0, 1]$, the monotonicity claims follow by noting that when $\rho\in(-\infty, -1)\cup[0, \infty)$:
\begin{equation*}
\frac{\rho}{1+\rho} \geq 0 \quad \Rightarrow \quad \left( 1 - \left( \frac{1 - z}{1 + z}\right) ^{\frac{-\rho}{1+\rho}}\right) \leq 0 \quad \Rightarrow \quad \frac{\partial g(\rho, z)}{\partial z} \leq 0,
\end{equation*}
and when $\rho\in(-1, 0]$:
\begin{equation*}
 \frac{\rho}{1+\rho} \leq 0 \quad \Rightarrow \quad \left( 1 - \left( \frac{1 - z}{1 + z}\right) ^{\frac{-\rho}{1+\rho}}\right) \geq 0 \quad \Rightarrow \quad \frac{\partial g(\rho, z)}{\partial z} \geq 0. 
\end{equation*}

Taking the second derivative with respect to $z$, we get
\begin{equation}
 \frac{\partial^{2} g(\rho, z)}{\partial z^{2}} = - \frac{\rho}{1+\rho} \underbrace{\left( 1 - z^{2}\right)^{\frac{1}{1+\rho}-2} \left(  \frac{1}{2}(1 + z)^{\frac{1}{1+\rho}} + \frac{1}{2}(1 - z)^{\frac{1}{1+\rho}}  \right) ^{-1+\rho}}_{\geq 0}. \nonumber
\end{equation}
The convexity claims follow once more by inspecting the sign of $\displaystyle\frac{\rho}{1+\rho}$ in different intervals, i.e. when $\rho\in(-\infty, -1)\cup[0, \infty)$:
\begin{equation*}
\frac{\rho}{1+\rho} \geq 0 \quad \Rightarrow \quad \frac{\partial^{2} g(\rho, z)}{\partial z^{2}} \leq 0, 
\end{equation*}
and when $\rho\in(-1, 0]$:
\begin{equation*}
 \frac{\rho}{1+\rho} \leq 0 \quad \Rightarrow \quad \frac{\partial^{2} g(\rho, z)}{\partial z^{2}} \geq 0. 
\end{equation*}
\end{proof}

\subsection*{Appendix A}\label{appendix:A}
\begin{proof}[Proof of Lemma \ref{lem::conv_minus}]
We prove that the function $ F_{z, \rho}(t) = g(\rho, z g^{-1}(\rho, t)) $ defined in Equation \ref{eq:convexityminus}
is convex with respect to the variable $t$ for fixed $\rho\geq 0$ and $z\in[0, 1]$ values.
Taking the first derivative with respect to $t$, we obtain 
\begin{align}
\frac{\partial F_{z, \rho}(t)}{\partial t} &= \frac{\partial}{\partial t}g(\rho, zg^{-1}(\rho, t)) \nonumber \\
&= \frac{g'(\rho, z g^{-1}(\rho_1, t))}{g'(\rho_1, g^{-1}(\rho_1, t))} z. \nonumber
\end{align}
We define $u = g^{-1}(\rho, t)$. Since $g(\rho, u)$ is a non-increasing function in $u$ when $\rho\geq 0$ by Lemma \ref{lem:g_concavity}, 
so is $g^{-1}(\rho, t)$ in $t$. Hence we can check the convexity of $F_{z, \rho}(t)$ with respect to the variable $t$,
from the monotonicity with respect to $u$ of the following expression:
\begin{equation}\label{eq:mono}
 z\frac{g'(\rho_2, zu)}{g'(\rho_1, u)}.
\end{equation}

To simplify notation, we define
\begin{align}
 \label{eq:f} f(u) &= \frac{1 - u}{1 + u}  \\
 \label{eq:alpha}\alpha(\rho, u) &= (1 +  f(u)^{\frac{1}{1+\rho}})^{\rho}  \geq 0\\
 \label{eq:beta}\beta(\rho, u) &= (1 - f(u)^{\frac{-\rho}{1+\rho}}) \leq 0 
\end{align}
Then, by equation \eqref{eq::dgu}
\begin{equation}
 \frac{\partial g(\rho, u)}{\partial u} =  \left(\frac{1}{2}\right)^{1+\rho} \alpha(\rho, u) \beta(\rho, u) \nonumber
\end{equation}
Similarly,
\begin{align}
 \frac{\partial g(\rho, zu)}{\partial u} &= z g'(\rho, zu) \nonumber \\ 
&=  \left(\frac{1}{2}\right)^{1+\rho} z \alpha(\rho, zu) \beta(\rho, zu),  \nonumber
\end{align}
and \eqref{eq:mono} is given by
\begin{equation}\label{eq:appBexp1}
 z \frac{g'(\rho_2, zu)}{g'(\rho_1, u)} = z \frac{\alpha(\rho, zu) \beta(\rho, zu)}{\alpha(\rho, u) \beta(\rho, u)}. 
\end{equation}
Now taking the derivative of \eqref{eq:appBexp1} with respect to $u$, we get
\begin{align}
&\frac{\partial}{\partial u} \hspace{3mm} z \frac{\alpha(\rho,zu) \beta(\rho, zu)}{\alpha(\rho, u) \beta(\rho, u)} \nonumber \\
= &z \underbrace{\frac{\alpha(\rho, zu) \beta(\rho, zu)}{\alpha(\rho, u) \beta(\rho, u)}}_{\geq 0} \nonumber \\
\label{eq:sign1}&\left( \frac{\partial \alpha(\rho, zu)/\partial u}{\alpha(\rho, zu)} + \frac{\partial \beta(\rho, zu)/\partial u}{\beta(\rho, zu)} - \frac{\partial \alpha(\rho, u)/\partial u}{\alpha(\rho, u)} - \frac{\partial \beta(\rho, u)/\partial u}{\beta(\rho, u)}\right) 
\end{align}
We can see that the sign of the expression inside the parenthesis in \eqref{eq:sign1} will determine the monotonicity in $u$ of the expression in 
\eqref{eq:appBexp1}. At this point, we note that
\begin{equation}\label{eq:monotonicity1}
 \frac{\partial \alpha(\rho, u)/\partial u}{\alpha(\rho, u)} + \frac{\partial \beta(\rho, u)/\partial u}{\beta(\rho, u)} =  \left( \frac{\partial \alpha(\rho, zu)/\partial u}{\alpha(\rho, zu)} + \frac{\partial \beta(\rho, zu)/\partial u}{\beta(\rho, zu)}\right)\Bigr\rvert_{z = 1} 
\end{equation}
Moreover, we claim that the expression inside the parenthesis in the RHS of \eqref{eq:monotonicity1} is non-decreasing in $z$. As a consequence, $F_{z, \rho}(t)$ is a concave function 
in $u = g^{-1}(\rho, t)$. Since $u$ is decreasing in $t$, we have  
\begin{equation}
 \displaystyle\frac{\partial^2 F_{z, \rho}(t)}{\partial t^2} = \underbrace{\frac{\partial}{\partial u}\left( z \frac{g'(\rho_2, zu)}{g'(\rho_1, u)} \right)}_{\leq 0} \underbrace{\frac{\partial u}{\partial t}}_{\leq 0} \geq 0. \nonumber 
\end{equation}
We conclude that $F_{z, \rho}(t)$ is a convex function with respect to variable $t$. 

In the rest of the appendix, we prove our claim. We have,
\begin{align} 
\label{eq:der_alphah_u}\frac{\partial \alpha(\rho, zu)}{\partial u} &= \frac{\rho}{1+\rho}  z f'(zu) f(zu)^{\frac{-\rho}{1+\rho}} (1 +  f(zu)^{\frac{1}{1+\rho}})^{\rho-1} \\
\label{eq:der_betah_u} \frac{\partial \beta(\rho, zu)}{\partial u} &= \frac{\rho}{1+\rho} z f'(zu) f(zu)^{\frac{-\rho}{1+\rho} - 1} 
\end{align}
where
\begin{equation}
 f'(u) = \displaystyle\frac{\partial f(u)}{\partial u} = \frac{-2}{(1 + u)^2}. \nonumber 
\end{equation}
Hence,
\begin{align}
 &\frac{\partial \alpha(\rho, zu)/\partial u}{\alpha(\rho, zu)} + \frac{\partial \beta(\rho, zu)/\partial u}{\beta(\rho, zu)} \nonumber \\
=\hspace{2mm} &\frac{\rho}{1+\rho} f(zu)^{\frac{-\rho}{1+\rho}-1} z f'(zu) \left( \frac{f(zu)}{1 + f(zu)^{\frac{1}{1+\rho}}} + \frac{1}{1 -f(zu)^{\frac{-\rho}{1+\rho}}}\right)   \nonumber \\
=\hspace{2mm}  &\frac{\rho}{1+\rho} f(zu)^{\frac{-\rho}{1+\rho}-1} z f'(zu) \left( \frac{f(zu) - f(zu)^{\frac{1}{1+\rho}} + 1  + f(zu)^{\frac{1}{1+\rho}} }{(1 + f(zu)^{\frac{1}{1+\rho}}) (1 -f(zu)^{\frac{-\rho}{1+\rho}})} \right)  \nonumber \\
=\hspace{2mm}  &\frac{\rho}{1+\rho} f(zu)^{\frac{-\rho}{1+\rho}-1} z f'(zu) (1 + f(zu)) (1 + f(zu)^{\frac{1}{1+\rho}})^{-1} (1 -f(zu)^{\frac{-\rho}{1+\rho}})^{-1} \nonumber \\
=\hspace{2mm}  &\frac{\rho}{1+\rho} z f'(zu) (1 + f(zu)^{-1}) (1 + f(zu)^{\frac{1}{1+\rho}})^{-1} (f(zu)^{\frac{\rho}{1+\rho}} - 1)^{-1} \nonumber \\
=\hspace{2mm}  &\frac{\rho}{1+\rho} \frac{-4 z}{(1 + zu)^2 (1 - zu)} \left( 1 + \left( \frac{1 - zu}{1 + zu}\right) ^{\frac{1}{1+\rho}}\right)^{-1}  \left( -1 + \left( \frac{1 - zu}{1 + zu}\right)^{\frac{\rho}{1+\rho}}\right)^{-1} \nonumber \\
=\hspace{2mm}  &\frac{4\rho}{1+\rho}  \left(  \underbrace{\frac{1 - z^2u^2}{z} \left( (1 + zu)^{\frac{\rho}{1+\rho}} - (1 - zu)^{\frac{\rho}{1+\rho}}\right)}_{\hbox{Part 2}}  \underbrace{\left( (1 + zu)^{\frac{1}{1+\rho}} + (1 - zu)^{\frac{1}{1+\rho}}\right)}_{\hbox{Part 1}} \right) ^{-1} \nonumber
\end{align}
We consider the expressions labeled as Part 1 and Part 2 separately. Note that both are positive valued. In addition, we will show that 
both are decreasing in $z$. As a result, we deduce
\begin{align}
&\frac{\partial}{\partial z}   \left(\frac{ \left( 1 - z^2u^2 \right)}{z}  \left( \left( 1 + zu \right) ^{\frac{1}{1+\rho}} +  \left( 1 - zu \right) ^{\frac{1}{1+\rho}} \right) \left( \left( 1 + uz \right) ^{\frac{\rho}{1+\rho}} -  \left( 1 - uz \right) ^{\frac{\rho}{1+\rho}}\right)\right)   \leq 0 \nonumber \\
&\frac{\partial}{\partial z} \left( \frac{\left(1 - z^2u^2\right)}{z} \left(\left(1 + zu\right)^{\frac{1}{1+\rho}} + \left(1 - zu\right)^{\frac{1}{1+\rho}}\right) \left(\left(1 + uz\right)^{\frac{\rho}{1+\rho}} - \left(1 - uz\right)^{\frac{\rho}{1+\rho}}\right)\right)^{-1}  \geq 0 \nonumber 
\end{align}
which is proves our claim. 

For Part 1, we get
\begin{align}
&\frac{\partial}{\partial z} \left(\left(1 + zu\right)^{\frac{1}{1+\rho}} + \left(1 - zu\right)^{\frac{1}{1+\rho}}\right) \nonumber \\
= &\frac{u \left(\left(1 + uz\right)^{\frac{-\rho}{1+\rho}} - \left(1 - uz\right)^{\frac{-\rho}{1+\rho}}\right)}{1+\rho} \leq 0 \nonumber 
\end{align}

For Part 2, we have
\begin{align}
&\frac{\partial}{\partial z} \left(\frac{\left(1 - u^2z^2\right)}{z} \left(\left(1 + uz\right)^{\frac{\rho}{1+\rho}} - \left(1 - uz\right)^{\frac{\rho}{1+\rho}}\right)\right) \nonumber \\
= &\hspace{2mm} \frac{1}{z^2} \frac{\rho u z \left( 1 - u^2 z^2\right) \left( \left( 1 + uz\right)^{\frac{\rho}{1+\rho}-1} + \left( 1 - uz\right)^{\frac{\rho}{1+\rho}-1} \right)}{1+\rho} \nonumber \\
\hspace{1mm}+ &\hspace{1mm}\frac{1}{z^2} \left( 1 + u^2 z^2\right) \left(-\left( 1 + uz\right)^{\frac{\rho}{1+\rho}} + \left( 1 - uz\right)^{\frac{\rho}{1+\rho}}\right)  \nonumber \\
= &\frac{1}{z^2} \left(\hspace{2mm} \left( 1 + uz\right)^{\frac{\rho}{1+\rho}} \left(\frac{\rho}{1+\rho} uz \left( 1 - uz\right) - (1 + u^2 z^2) \right)\right.  \nonumber \\
&\hspace{5mm} + \left. \left( 1 - uz\right)^{\frac{\rho}{1+\rho}} \left(\frac{\rho}{1+\rho} uz \left( 1 + uz\right) + (1 + u^2 z^2) \right)   \right) \nonumber \\ 
=  &\frac{1}{z^2} \left(\hspace{2mm}-\left( 1+x\right)^k \left( \left( k+1\right)x^2 - k x + 1\right) + \left( 1-x\right)^k \left( \left( k+1\right)x^2 + k x + 1\right) \right) \nonumber \\
\label{eq:appBexp2} = &\frac{1}{z^2} \left(\hspace{2mm}-f_1(x, k) + f_2(x, k) \right)
\end{align}
where $k = \frac{\rho}{1+\rho} \in[0, 1)$, $x = uz \in[0, 1]$, and
\begin{align}
 \label{eq:f1}f_1(x, k) &= \left( 1+x\right)^k \left( \left( k+1\right)x^2 - k x + 1\right), \\
 \label{eq:f2}f_2(x, k) &= \left( 1-x\right)^k \left( \left( k+1\right)x^2 + k x + 1\right). 
\end{align}

We will show that $f_1(x, k) \geq f_2(x, k)$ holds for $x\in[0, 1]$, and for $k\in[0, 1)$. Since $f_1(x, k), f_2(x, k) \geq 0$, 
this is equivalent to showing that $\log\displaystyle\frac{f_1(x, k)}{f_2(x, k)} \geq 0$ holds. We have
\begin{equation*}
 \log\displaystyle\frac{f_1(x, k)}{f_2(x, k)} = k  \log\displaystyle\frac{1+x}{1-x} + \log\left( \left( k+1\right)x^2 - k x + 1\right) - \log\left( \left( k+1\right)x^2 + k x + 1\right).
\end{equation*}
We immediately observe that when $k = 0$ we have the above sum equals to 0. Now, we will show that 
\begin{equation*}
 \displaystyle\frac{\partial}{\partial k}\log\displaystyle\frac{f_1(x, k)}{f_2(x, k)} \geq 0.
\end{equation*}
Hence, this will prove our claim that $f_1(x, k) \geq f_2(x, k)$ holds.

Taking the first derivative with respect to $k$, we have
\begin{equation*}
 \displaystyle\frac{\partial}{\partial k}\log\displaystyle\frac{f_1(x, k)}{f_2(x, k)} = \log\displaystyle\frac{1+x}{1-x} - \displaystyle\frac{2x\left(1 + x^2\right)}{\left(1 + (k+1)x^2\right)^2 - \left(kx\right)^2}.
\end{equation*}
So, we will be done if
\begin{equation*}
 \log\displaystyle\frac{1+x}{1-x} \geq 2x\left(1 + x^2\right) \displaystyle\max_{k\in[0, 1)} \displaystyle\frac{1}{\left(1 + (k+1)x^2\right)^2 - \left(kx\right)^2}.
\end{equation*}
One can easily check that the expression in the denominator $\left(1 + (k+1)x^2\right)^2 - \left(kx\right)^2$ is non-decreasing in $k\in[0, 1)$, hence the reciprocal is non-increasing in $k$. As a result, the 
maximum is attained at $k = 0$. Therefore, we only have to prove that
\begin{equation*}
 \log\displaystyle\frac{1+x}{1-x} \geq \displaystyle\frac{2x\left(1 + x^2\right)}{\left(1 + x^2\right)^2} = \displaystyle\frac{2x}{\left(1 + x^2\right)}
\end{equation*}
holds. But, we have
\begin{equation*}
 \log\displaystyle\frac{1+x}{1-x} = 2x\left(1 + \displaystyle\frac{1}{3}x^2 + \displaystyle\frac{1}{5}x^4 + \displaystyle\frac{1}{7}x^6 + \ldots\right) \geq 2x \geq \displaystyle\frac{2x}{\left(1 + x^2\right)}.
\end{equation*}
So, $-f_1(x, k)+ f_2(x, k) \leq 0$ holds for $k\in[0, 1)$ and $x\in[0, 1]$. Consequently, Part 2 is also decreasing in $z$. This proves our claim that the RHS of \eqref{eq:monotonicity1} is non-decreasing in $z$.
\end{proof}

\subsection*{Appendix B}\label{appendix:B}
\begin{proof}[Proof of Lemma \ref{lem::conv_plus}]
In this Appendix, we show that the function $H_{z, \rho}(t) = h(\rho, g^{-1}(\rho, t), z)$ defined in Equation \ref{eq:convexityplus} 
is concave with respect to the variable $t$ when $\rho\in[0, 1]\cup[2, \infty]$, and convex otherwise when $\rho\in[1, 2]$, for any fixed $z\in[0, 1]$, and $\rho\geq 0$.

Taking the first derivative with respect to $t$, we get
\begin{equation*}
 \displaystyle\frac{\partial}{\partial\ t}H_{z, \rho}(t) = \displaystyle\frac{ h'(\rho, g^{-1}(\rho, t), z)}{g'(\rho, g^{-1}(\rho, k))}.
\end{equation*}
As we did in Appendix A, we define $u = g^{-1}(\rho_1, t)$. Since $g(\rho, u)$ is a non-increasing function in $u$ by Lemma \ref{lem:g_concavity}, so is $g^{-1}(\rho_1, t)$ in $t$. 
Hence we can check the concavity of $H_{z, \rho_1}(t)$ with respect to variable $t$, by verifying that 
\begin{equation*}
 \displaystyle\frac{h'(\rho, u, z)}{g'(\rho, u)}
\end{equation*}
is non-decreasing in $u$.
So, we check that 
\begin{equation*}
 \displaystyle\frac{\partial}{\partial u} \left(\displaystyle\frac{h'(\rho, u, z)}{g'(\rho, u)}\right) = \displaystyle\frac{h''(\rho, u, z)g'(\rho, u) - h'(\rho, u, z)g''(\rho, u)}{g'(\rho, u)^2} \geq 0.
\end{equation*}
Since the denominator is always positive, we only need to show that
\begin{equation}
h''(\rho, u, z)g'(\rho, u) - h'(\rho, u, z)g''(\rho, u) \geq 0.
\end{equation}
Moreover, we observe that $h(\rho, u, 0) = g(\rho, u)$. So, we can equivalently show the following relation holds:
\begin{equation}\label{eq::h_ineq}
 \displaystyle\frac{h''(\rho, u, z)}{h'(\rho, u, z)} \geq  \displaystyle\frac{h''(\rho, u, 0)}{h'(\rho, u, 0)}.
\end{equation}

We first apply the transformations 
\begin{equation*}
u = \tanh(k), \hspace{5mm} z = \tanh(w)  
\end{equation*}
where $k, w\in[0, \infty)$. For shorthand notation, let $h(\rho, \tanh(k), \tanh(w)) \triangleq \tilde{h}(\rho, k, w)$. Using these, we obtain
\begin{equation*}
\tilde{h}(\rho, k, w) = \displaystyle\frac{\cosh(\frac{1}{1+\rho}(k+w))^{1+\rho} + \cosh(\frac{1}{1+\rho}(k-w))^{1+\rho}}{2\cosh(k)\cosh(w)}.
\end{equation*}
Then,
\begin{multline}\label{eq::R_factor}
\displaystyle\frac{\displaystyle\frac{\partial  h(\rho, k, w)}{\partial k^2}}{\displaystyle\frac{\partial h(\rho, k, w)}{\partial k}} = -2 \tanh(k) + \frac{\rho}{1+\rho}\cosh(k) \times \\
 \left[\displaystyle\frac{\cosh(\frac{1}{1+\rho}(k+w))^{\rho-1} + \cosh(\frac{1}{1+\rho}(k-w))^{\rho-1}}{\cosh(\frac{1}{1+\rho}(k+w))^{\rho}\sinh(\frac{\rho}{1+\rho}k-\frac{1}{1+\rho}w) 
+ \cosh(\frac{1}{1+\rho}(k-w))^{\rho}\sinh(\frac{\rho}{1+\rho}k+\frac{1}{1+\rho}w)}\right].
\end{multline}
We note that the additive term $-2 \tanh(k)$, and the non-negative multiplicative factor $\frac{\rho}{1+\rho}\cosh(k)$ do not depend on $w$. Hence, we only need to show the term inside the parenthesis is smallest 
when evaluated at $w = 0$. For this purpose, we define the transformations
\begin{equation*}
 a = \frac{k + w}{1+\rho}, \hspace{5mm} b = \frac{k - w}{1+\rho} 
\end{equation*}
such that $k = (1+\rho)\displaystyle\frac{a+b}{2}$, and $w = (1+\rho)\displaystyle\frac{a-b}{2}$. The condition $k, w\geq 0$ is equivalent to $a \geq |b|$. 
Using these transformations, the reciprocal of the term inside parenthesis in equation \eqref{eq::R_factor} becomes  
\begin{equation*}
 R(\rho, a, b) = \frac{\cosh(b)^{1-\rho} \cosh(a) \sinh(\frac{a+b}{2}\rho - \frac{a-b}{2}) + \cosh(a)^{1-\rho} \cosh(b) \sinh(\frac{a+b}{2}\rho + \frac{a-b}{2})}{\cosh(a)^{1-\rho} + \cosh(b)^{1-\rho}}.  
\end{equation*}
Therefore, the inequality given in \eqref{eq::h_ineq} will hold iff
\begin{equation}\label{eq::R_ineq}
 R(\rho, a, b) \leq R(\rho, \frac{a+b}{2}, \frac{a+b}{2}) = \cosh(\frac{a+b}{2})\sinh(\frac{a+b}{2}\rho).
\end{equation}

We define
\begin{multline*}
 f(\rho, a, b) \triangleq \cosh(\frac{a+b}{2}) \sinh(\rho\frac{a+b}{2}) \left[\cosh(a)^{1-\rho} + \cosh(b)^{1-\rho}\right] \\
- \cosh(a)^{1-\rho}\cosh(b) \sinh(\rho \frac{a+b}{2} + \frac{a-b}{2}) - \cosh(b)^{1-\rho} \cosh(a) \sinh(\rho\frac{a+b}{2} - \frac{a-b}{2}).
\end{multline*}
We note that $f(\rho, a, b)\geq 0$ is equivalent to the inequality \eqref{eq::R_ineq}, which in turn is equivalent to the inequality \eqref{eq::h_ineq}.\\

After simplifications, the function reduces to the following form:
\begin{equation*}
  f(\rho, a, b) = \sinh(\frac{a-b}{2})J(\rho, a, b)
\end{equation*}
where 
\begin{equation*}
J(\rho, a, b) \triangleq \cosh(b)^{1-\rho}\cosh(a - \rho\frac{a+b}{2}) - \cosh(a)^{1-\rho}\cosh(b - \rho\frac{a+b}{2}).
\end{equation*}
Since for $a\geq|b|$, we have
\begin{equation*}
\sinh(\frac{a-b}{2})\geq 0
\end{equation*}
we only need to show that $J(\rho, a, b) \geq 0$. \\

We introduce the variables $k'$, and $w'$ using $a = k'+w'$, and $b = k'-w'$ where $k', w'\in[0, \infty)$. Then, we get
\begin{equation*}
J(\rho, k'+w', k'-w') = \cosh(k' - w')^{1 - \rho} \cosh(k' - \rho k' + w') - \cosh(k' - \rho k' - w') \cosh(k' + w')^{1 - \rho}.
\end{equation*}
We note that $J(\rho, k'+w', k'-w')\Bigl\lvert_{k'=0} = 0$. Moreover, $J(\rho, k'+w', k'-w')$ is increasing in the variable $k'$:
taking the first derivative with respect to $k'$, we get
\begin{equation*}
 \displaystyle\frac{\partial}{\partial k'} J(\rho, k'+w', k'-w') = (1-\rho)\left[\cosh(k'-w')^{-\rho} - \cosh(k'+w')^{-\rho}\right] \sinh((2-\rho)k') \geq 0
\end{equation*}
where the positivity follows from the fact that $|k' - w'| \leq |k' - w'|$, thus $\cosh(k'-w') \leq \cosh(k'+w')$, and $\cosh(k'-w')^{-\rho} \geq \cosh(k'+w')^{-\rho}$, and from the fact that 
$\sinh(x)\geq 0$ holds for $\forall x \geq 0$. \\

As a result, $J(\rho, k'+w', k'-w') \geq 0$ as required, and we have shown that the inequality given in \eqref{eq::h_ineq} holds. This concludes the proof.  
\end{proof}

\footnotesize

\normalsize

\end{document}